
\newif\ifpaper
\paperfalse

\ifpaper
\documentclass[runningheads,envcountsect,envcountreset,envcountsame,orivec]{llncs}

\else
\documentclass[a4paper,10pt]{article}

\fi

\usepackage{amsmath}
\usepackage{amsfonts}
\usepackage{amssymb}         
\usepackage{amsopn}
\usepackage{theorem}          
\usepackage{latexsym}
\usepackage{graphicx}        
\usepackage{mathrsfs}		
\usepackage{psfrag}
\usepackage{algorithmic}
\usepackage{algorithm}
\usepackage{color}
\usepackage{verbatim}
\usepackage{url}
\usepackage{tikz}
\usepackage[caption=false,format=hang]{subfig}
\usepackage{enumitem}

\colorlet{darkred}{red!70!black}
\colorlet{darkblue}{blue!70!black}
\colorlet{darkgreen}{green!60!black}

\usetikzlibrary{decorations}
\usetikzlibrary{arrows}

\pagestyle{headings}


\ifpaper
\else
\setlength{\parindent}{0.5cm}
\setlength{\parskip}{0.3cm}
\setlength{\oddsidemargin}{0.5cm}
\setlength{\textwidth}{16cm}
\setlength{\textheight}{23cm}
\setlength{\footskip}{1cm}
\setlength{\topmargin}{0cm}

\newcommand{\qed}{}
\newtheorem{result}{\ }[section]
\theoremstyle{changebreak}                
\newtheorem{theorem}[result]{Theorem}

\newtheorem{lemma}[result]{Lemma}
\newtheorem{corollary}[result]{Corollary}
\newtheorem{proposition}[result]{Proposition}
\newtheorem{example}[result]{Example}
\newenvironment{proof}
 {{\sl Proof.}\hspace*{1 ex}}%
 {{\nopagebreak\hspace*{\fill}$\Box$\par\vspace{12pt}}}
\fi

\newcommand{\transpose}[1]{{#1}^{\top}}
\newcommand{\natop}[2]{\genfrac{}{}{0pt}{}{#1}{#2}}
\begin{document}

\ifpaper

\title{On the number of solutions of the discretizable molecular distance geometry problem}
\titlerunning{On the number of solutions of the DMDGP}
\author{\sc Leo Liberti${}^1$ \and Beno\^{\i}t Masson${}^2$ \and Jon Lee ${}^3$ \and Carlile Lavor ${}^4$ \and Antonio Mucherino${}^5$} 
\institute{LIX, \'Ecole Polytechnique, 91128 Palaiseau, France
  \email{\url{liberti@lix.polytechnique.fr}}\and
  IRISA, INRIA, Campus de Beaulieu, 35042 Rennes, France
  \email{\url{benoit.masson@inria.fr}}\and
  Dept.~of Mathematical Sciences, IBM T.J.~Watson Research Center, PO
  Box 218, Yorktown Heights, NY 10598, USA,
  \email{\url{jonlee@us.ibm.com}}\and
  Department of Applied Mathematics (IMECC-UNICAMP),
  State University of Campinas, C.P.~6065, 13081-970, Campinas - SP,
  Brazil, \email{\url{clavor@ime.unicamp.br}}\and
  INRIA Futurs, Lille, France 
  \email{\url{antonio.mucherino@inria.fr}}}
\authorrunning{Liberti {\it et al.}}

\maketitle

\else

\thispagestyle{empty}
\begin{center} 
{\LARGE On the number of solutions of the discretizable molecular distance geometry problem}
\par \bigskip
{\sc Leo Liberti${}^1$, Beno\^{\i}t Masson${}^2$, Jon Lee ${}^3$, Carlile Lavor ${}^4$, Antonio Mucherino${}^5$}
\par \bigskip
\begin{minipage}{15cm}
\begin{flushleft}
{\small
\begin{itemize}
\item[${}^1$] {\it LIX, \'Ecole Polytechnique, 91128 Palaiseau, France}\\
  Email:\url{liberti@lix.polytechnique.fr} 
\item[${}^2$] {\it IRISA, INRIA, Campus de Beaulieu, 35042 Rennes, France}\\
  Email:\url{benoit.masson@inria.fr}
\item[${}^3$] {\it Dept.~of Mathematical Sciences, IBM T.J.~Watson
  Research Center, PO Box 218, Yorktown Heights, NY 10598, USA}\\
  Email:\url{jonlee@us.ibm.com}
\item[${}^4$] {\it Department of Applied Mathematics (IMECC-UNICAMP),
  State University of Campinas, C.P.~6065, 13081-970, Campinas - SP,
  Brazil} \\ Email:\url{clavor@ime.unicamp.br}
\item[${}^5$] {\it INRIA Futurs, Lille, France} \\
  Email:\url{antonio.mucherino@inria.fr}
\end{itemize}
}
\end{flushleft}
\end{minipage}
\par \medskip \today
\end{center}
\par \bigskip

\fi

\begin{abstract} 
The Generalized Discretizable Molecular Distance Geometry Problem is a
distance geometry problems that can be solved by a combinatorial
algorithm called ``Branch-and-Prune''. It was observed empirically
that the number of solutions of YES instances is always a power of
two. We give a proof that this event happens with probability one.
\noindent {\bf Keywords}: distance geometry, symmetry, Branch-and-Prune, power of two.
\end{abstract}


\section{Introduction}
We consider the following problem arising in the analysis of Nuclear
Magnetic Resonance (NMR) data for general molecules.
\begin{quote}
{\sc Molecular Distance Geometry Problem} (MDGP).\par
 Given an undirected graph $G=(V,E)$ and a function $d:E\to\mathbb{R}$,
decide whether there is an embedding $x:V\to\mathbb{R}^3$ such that
\begin{equation}
  \forall \{u,v\}\in E \quad (||x_u-x_v||=d_{uv}) \label{mdgpeq}
\end{equation}
\end{quote}
The MDGP is usually cast as a nonconvex unconstrained mathematical
program $\min \sum_{\{u,v\}\in E} (||x_u-x_v||^2 - d_{uv}^2)^2$, which
can be solved using continuous Global Optimization techniques
\cite{lln1}. See \cite{lln4,mdgpsurvey} for surveys.

If the NMR data is obtained from a protein, it is usually possible to
distinguish between backbone atoms and atoms of side chains. We can
then attempt to place the backbone first \cite{lln5} and the side
chains later \cite{santana}. We therefore assume that $G$ is the graph
of the backbone. If a total order is given on $V$ such that for each
atom $v$ there are at least three atoms that are adjacent to it in $E$
and preceding it in the order (these are called the {\it adjacent
  predecessors} of $v$), if strict triangle inequality holds for $d$
restricted to at least one triplet of adjacent predecessor for each
vertex, and if spatial positions are given for the first three atoms,
there is only a finite number of possible embeddings compatible with
the given distances, which can all be found using the so-called
Branch-and-Prune (BP) algorithm \cite{lln5}. The subset of MDGP
instances for which the described atomic order exists is called {\sc
  Discretizable MDGP} (DMDGP) \cite{mdgpsurvey} and is known to be
{\bf NP}-hard \cite{lln2}. It was empirically observed that for any
given practical instance, BP always finds a number of solution that is
a power of two \cite{lln5}. An isolated earlier counterexample,
derived via a complexity reduction, given as Lemma 5.1 in \cite{lln2},
falsifies this conjecture. In this paper we give a formal description
of the DMDGP in arbitrary dimensions (Sect.~\ref{s:ddgp}), of the BP
algorithm and some of its theoretical properties
(Sect.~\ref{s:bp}). We then study some geometrical aspects of the BP
tree (Sect.~\ref{s:symm}), and prove that the number of solutions of
feasible DMDGP instances is a power of two with probability one
(Sect.~\ref{s:nbsol}). We also exhibit a family of counterexamples
(Sect.~\ref{s:counterex}) that are much more intuitive than the one
given in \cite{lln2}.

\section{The Discretizable Molecular Distance Geometry Problem}
\label{s:ddgp}
The generalization of the MDGP to arbitrary dimensions asks for an
embedding of $G$ in $\mathbb{R}^K$ satisfying \eqref{mdgpeq} and is
called the {\sc Distance Geometry Problem} (DGP). The generalization
of the DMDGP to $\mathbb{R}^K$ replaces triplets with $K$-uples of
adjacent predecessors and strict triangle with strict simplex
inequalities \cite{jiao}. For a set $U=\{x_i\in\mathbb{R}^K\;|\;i\le
K+1\}$ of points in $\mathbb{R}^{K}$, let $D$ be the symmetric matrix
whose $(i,j)$-th component is $\|x_i-x_j\|^2$ for all $i,j\le K+1$ and
let ${D'}$ be $D$ bordered by a left $\transpose{(0,1,\ldots,1)}$
column and a top $(0,1,\ldots,1)$ row (both of size $K+2$). Then the
Cayley-Menger formula states that the volume $\Delta_K(U)$ of the
$K$-simplex on $U$ is given by $\Delta_K(U)
=\sqrt{\frac{(-1)^{K+1}}{2^K(K!)^2}|{D'}|}$.  The strict simplex
inequalities are given by $\Delta_K(U)>0$. For $K=3$, these reduce to
strict triangle inequalities. We remark that only the distances of the
simplex edges are necessary to compute $\Delta_K(U)$, rather than the
actual points in $U$; the needed information can be encoded as a
complete graph ${\bf K}_{K+1}$ on $K+1$ vertices with edge weights as
the distances.

Let $n=|V|$ and $m=|E|$. For all $v\in V$, let $N(v)=\{u\in
V\;|\;\{u,v\}\in E\}$ be the star of vertices around $v$ (also called
the adjacencies of $v$); for a directed graphs $(V,A)$, where
$A\subseteq V\times V$, we denote the outgoing star by $N^+(v)=\{u\in
V\;|\;(v,u)\in A\}$. For an order $<$ on $V$, let $\gamma(v)=\{u\in
V\;|\;u<v\}$ be the set of predecessors of $v$, and let
$\rho(v)=|\gamma(v)|+1$ the rank of $v$ in $<$. For $V'\subseteq V$,
we denote by $G[V']$ the subgraph of $G$ induced by $V'$. For a finite
set $M$, let $\mathcal{P}(M)$ be its power set. We call an embedding
$x$ of $G$ {\it valid} if (\ref{mdgpeq}) holds for $G$. For a sequence
$x=(x_1,\ldots,x_n)$ and a subset $U\subseteq\{1,\ldots,n\}$ we let
$x[U]$ be the subsequence of $x$ indexed by $U$. If $x$ is an initial
subsequence of $y$, then $y$ is an {\it extension} of $x$.

\begin{quote}
{\sc Generalized DMDGP} (GDMDGP). Given an undirected graph $G=(V,E)$,
an edge weight function $d:E\to\mathbb{R}_+$, an integer $K>0$, a
subset $V_0\subseteq V$ with $|V_0|=K$, a partial embedding
$\bar{x}:V_0\to\mathbb{R}^K$ valid for $G[V_0]$, and a total order $<$
on $V$ such that:
\begin{eqnarray}
   \{v\in V\;|\;\rho(v)\le K\}&=&V_0; \label{ddgpeq0} \\
  \forall v\in V \quad (\rho(v)>K \to |N(v)\cap\gamma(v)| &\ge&
    K); \label{ddgpeq1} \\ 
  \forall v\in V\smallsetminus V_0\; \exists U_v\subseteq
    N(v)\cap\gamma(v) \; (G[U_v]&=& {\bf K}_K \land \nonumber \\
     \Delta_{K-1}(U_v)>0\land\forall u\in U_v (\rho(v)-K\le \rho(u) 
     \le \rho(v) &-& 1)), \label{ddgpeq2}
\end{eqnarray}
decide whether there is a valid extension $x:V\to\mathbb{R}^K$ of
$\bar{x}$.
\end{quote}
Conditions (\ref{ddgpeq0}-\ref{ddgpeq2}) allow the search for vertex
$v$ to only depend on the $K$ vertices of rank preceding $\rho(v)$, as
$x_v$ is the intersection of at least $K$ spheres centered at $x_u$
and with radius $d_{uv}$ for all $u\in U_v$. This, in particular,
implies that the predecessors of $v$ are placed before $v$, so that
all of the distances between all predecessors are known when placing
$v$. Thus, we can replace the condition $G[U_v]={\bf K}_K$ in
\eqref{ddgpeq2} by $|U_v|=K$. We remark that eliminating the condition
$\forall u\in U_v (\rho(v)-K\le \rho(u) \le \rho(v)-1)$ from
\eqref{ddgpeq2} and fixing $K$ gives rise to a problem called the {\sc
  Discretizable Distance Geometry Problem} in $K$ dimensions
(DDGP${}_K$), discussed in \cite{ddgp}, which is itself a subset of
instances of the DDGP, where the number of dimensions is not
fixed. The DMDGP is therefore the subset of the GDMDGP for which
$K=3$ \cite{lln2}. In summary, we have the following diagram (where
arrows represent the $\subset$ relation).
\begin{center}
\psfrag{DMDGP}{\scriptsize DMDGP}
\psfrag{DDGP3}{\scriptsize DDGP${}_3$}
\psfrag{MDGP}{\scriptsize MDGP}
\psfrag{DGP}{\scriptsize DGP}
\psfrag{GDMDGP}{\scriptsize GDMDGP}
\psfrag{DDGP}{\scriptsize DDGP}
\includegraphics[width=9cm]{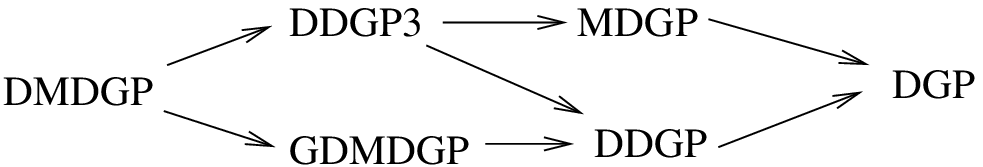}
\end{center}
A polynomial reduction from {\sc Subset-Sum} to the DMDGP shows that
the GDMDGP is {\bf NP}-hard. Even restricting the range of $d$ to the
set of rational numbers, it is not clear whether there are YES
certificates for the DDGP that have polynomially bounded size.

In the following, we assume that the probability of any point of
$\mathbb{R}^K$ belonging to any given subset of $\mathbb{R}^K$ having
Lebesgue measure zero is equal to zero.

\section{Branch-and-Prune}
\label{s:bp}
The BP algorithm for the DMDGP, presented in \cite{lln5}, can easily
be generalized to the GDMDGP. As mentioned above, once the vertices of
$U_v$ have been embedded in $\mathbb{R}^K$, the known distances
from vertices in $U_v$ to a given $v$ will enforce the position of $v$
as the intersection of $K$ spheres. Under strict simplex inequalities,
this intersection consists of at most two distinct points. The BP
exploits this fact to recursively generate a binary search tree of
height at most $n$ where a node at level $i$ represents a possible
placement in $\mathbb{R}^K$ of the vertex of $G$ with rank $i$ in
$<$. Paths of length $n$ correspond to valid embeddings.

Let $G$ be a GDMDGP instance. Consider $v\in V$ with rank
$\rho(v)=i>K$, let $G^v=G[\gamma(v)\cup\{v\}]$ and $x$ be a valid
embedding of $G[\gamma(v)]$. We characterize the number of extensions
of $x$ valid for $G^v$ in the following lemmata (which also hold
for the DDGP); the proof technique for Lemma \ref{atmosttwolemma} is
well known \cite{dongwu,protti}; we repeat the proof here because we
believe the explicit form \eqref{quadraticzK} is unpublished.
\begin{lemma}
If $|N(v)\cap\gamma(v)|=K$ then there are at most two distinct
extensions of $x$ that are valid for $G^v$. If one valid extension
exists, then with probability 1 there are exactly two distinct valid
extensions.
 \label{atmosttwolemma}
\end{lemma}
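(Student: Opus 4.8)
The plan is to set up coordinates tied to the $K$ embedded vertices of $U_v = N(v)\cap\gamma(v)$ and reduce the placement of $x_v$ to a system of quadratic equations coming from the distance constraints $\|x_v - x_u\| = d_{uv}$ for all $u\in U_v$. First I would write down these $K$ sphere equations; subtracting one of them (say the one centered at a chosen base vertex) from each of the others cancels the quadratic term $\|x_v\|^2$ and yields $K-1$ \emph{linear} equations in the coordinates of $x_v$. Under the strict simplex inequality $\Delta_{K-1}(U_v)>0$, the $K$ centers are affinely independent, so these $K-1$ linear equations are linearly independent and cut out a line $\ell$ (the affine subspace orthogonal to the affine hull of $U_v$ through a determined foot point). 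Intersecting $\ell$ with any one of the original spheres gives a single scalar quadratic equation, which I would display in the explicit form \eqref{quadraticzK}: writing $x_v = p + z\,\nu$ where $p$ is the foot of the perpendicular and $\nu$ a unit normal to $\mathrm{aff}(U_v)$, the constraint becomes $z^2 = r^2$ for an explicitly computable $r^2 = d_{uv}^2 - \|p - x_u\|^2$.

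From this normal form the whole statement follows. A real quadratic $z^2 = r^2$ has at most two real roots, giving the ``at most two'' bound; the two roots $z = \pm r$ correspond to the two reflections of $x_v$ across the hyperplane $\mathrm{aff}(U_v)$, which are distinct precisely when $r\neq 0$, i.e. when $r^2 > 0$. So the count degenerates to one solution exactly in the measure-zero event $r^2 = 0$, and is empty when $r^2 < 0$ (the spheres fail to meet). This is where the probabilistic clause enters: the hypothesis that a valid extension exists rules out $r^2 < 0$, leaving $r^2 \ge 0$, and I would argue that $r^2 = 0$ is a non-generic coincidence. Concretely, $r^2$ is a fixed polynomial function of the input data $(d_{uv}, \bar x, \ldots)$, so the locus $\{r^2 = 0\}$ is an algebraic variety of positive codimension, hence of Lebesgue measure zero in the space of inputs; by the standing assumption stated at the end of Sect.~\ref{s:ddgp} that measure-zero events occur with probability zero, we get exactly two distinct valid extensions with probability one.

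I expect the main obstacle to be the bookkeeping that turns the geometric picture into the clean scalar equation \eqref{quadraticzK}, rather than any conceptual difficulty: one must verify that the reduction of the sphere system to a single line genuinely uses affine independence of the $K$ centers (so that $\Delta_{K-1}(U_v)>0$ is exactly the hypothesis needed), and one must confirm the claimed explicit value of $r^2$ in terms of the data. The secondary subtlety is the precise formulation of the genericity argument: I would make sure that $r^2$, viewed as a function of the continuous data, is not identically zero (it is visibly nonzero for generic distances, e.g. when the last sphere is large), so that $\{r^2 = 0\}$ really is a proper subvariety and not the whole space; only then does the measure-zero conclusion apply. With the normal form in hand, both the deterministic ``at most two'' bound and the probabilistic ``exactly two'' refinement are immediate.
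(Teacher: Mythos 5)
Your proposal is correct and takes essentially the same approach as the paper's proof: subtract one sphere equation from the rest to get $K-1$ linear equations whose full rank follows from the strict simplex inequality, reduce to a single scalar quadratic, and dismiss the degenerate double-root case as a measure-zero event. The only difference is cosmetic --- you parametrize the residual line geometrically as $p+z\nu$ (which makes the reflection symmetry of the two roots transparent), whereas the paper solves for the first $K-1$ coordinates in terms of $z_K$ by inverting a submatrix to obtain the explicit quadratic \eqref{quadraticzK}.
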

\begin{proof}
Since $|N(v)\cap\gamma(v)|=K$, $U_v=N(v)\cap\gamma(v)$ and
$v$ is at the intersection of exactly $K$ spheres in $\mathbb{R}^K$
(each centered at $x_u$ with radius $d_{uv}$, where $u\in U_v$). The
position $z\in\mathbb{R}^K$ of $v$ must then satisfy:
\begin{equation}
  \forall u\in U_v \quad \|z-x_u\| = d_{uv} \Rightarrow
   \|z\|^2-2 x_u\cdot z+\|x_u\|^2 = d_{uv}^2. \label{prelin}
\end{equation}
As in \cite{dong03}, we choose an arbitrary $w\in U_v$, say
$w=\max_<U_v$, and subtract from the Eq.~(\ref{prelin}) indexed by $w$
the other equations of (\ref{prelin}), obtaining the system:
\begin{equation}
  \left.\begin{array}{rcl}
  \forall u\in U_v \smallsetminus \{w\} \quad
    2(x_u-x_w)\cdot z &=& (\|x_u\|^2-d_{uv}^2) - (\|x_w\|^2-d_{wv}^2) \\
  \|z\|^2-2x_w\cdot z+\|x_w\|^2 &=&
  d_{wv}^2. \end{array}\right\} \label{quadsys} 
\end{equation}
The system (\ref{quadsys}) consists of a set of $K-1$ linear equations
and a single quadratic equation in the $K$-vector $z$. We write the
linear equations as the system $Az=b$, where the $(u,j)$-th component
of $A$ is $2(x_{uj}-x_{wj})$, the $u$-th component of $b$ is
$\|x_u\|^2-\|x_w\|^2-d_{uv}^2+d_{wv}^2$, $A$ is $(K-1)\times K$ and
$b\in\mathbb{R}^{K-1}$. By strict simplex inequality, $A$ has full
rank (for otherwise $\sum_{u\not=w} \lambda_u(x_u-x_w)=0$ implies that
$x_w$ is in the span of $\{x_u\;|\;u\in U_v\}$, and hence that
$\Delta_{K-1}(U_v)=0$); so without loss of generality assume that the
square matrix $B$ formed by the first $K-1$ columns of $A$ is
invertible. Let $z_B$ be the vector consisting of the first $K-1$
components of $z$; then the linear part (first $K-1$ equations) of
(\ref{quadsys}) yields $z_B=B^{-1}(b-Nz_K)$, where
$N=2(x_{uK}-x_{wK}\;|\;u\in
U_v\smallsetminus\{w\})\in\mathbb{R}^{K-1}$. After replacement of
$z_B$ in (\ref{quadsys}) with $z_B(z_K)$, we obtain the following
quadratic equation in $z_K$:
\begin{equation}
(\|\bar{N}\|^2+1)z^2_K - 2((\bar{b}+x_{wB})\bar{N}+x_{wK})z_k+
  (\|x_{wB}-\bar{b}\|^2+x_{wK}^2-d_{wv}^2)=0,
\label{quadraticzK}
\end{equation}
where $\bar{b}=B^{-1}b$ and $\bar{N}=B^{-1}N$.  If the discriminant of
\eqref{quadraticzK} is negative then no extension of $\bar{x}$ to $v$
is possible and the result follows. If the discriminant is
nonnegative, \eqref{quadraticzK} has solutions $z'_K,z''_K$ yielding
points $z'=(z_B(z_K'),z_K')$ and
$z''=(z_B(z_K''),z_K'')\in\mathbb{R}^K$, which are distinct with
probability 1 because the discriminant is zero with probability 0.
The extended embeddings, distinct with probability 1, are given by
$(x,z')$ and $(x,z'')$.  \qed
\end{proof}

\begin{lemma}
If $|N(v)\cap\gamma(v)|>K$ then, with probability~$1$, there is at
most one extension of $x$. \label{atmostonelemma}
\end{lemma}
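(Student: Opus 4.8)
The plan is to reduce the statement to Lemma~\ref{atmosttwolemma} and then to eliminate one of the two candidate positions by means of a single additional sphere. First I would use condition~\eqref{ddgpeq2} to fix a subset $U_v\subseteq N(v)\cap\gamma(v)$ with $|U_v|=K$ and $\Delta_{K-1}(U_v)>0$. Since $x$ is a valid embedding of $G[\gamma(v)]$, the positions $x_u$ for $u\in U_v$ are determined, and the $K$ spheres centered at them (radii $d_{uv}$) satisfy the hypotheses of Lemma~\ref{atmosttwolemma}; hence any extension of $x$ valid for $G[U_v\cup\{v\}]$, and a fortiori any extension valid for $G^v$, must place $v$ at one of at most two points $z',z''\in\mathbb{R}^K$. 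Because $|N(v)\cap\gamma(v)|>K$, there is at least one further vertex $w_0\in(N(v)\cap\gamma(v))\smallsetminus U_v$, and a valid extension must also satisfy $\|z-x_{w_0}\|=d_{w_0 v}$. The whole problem is thus to show that, with probability $1$, at most one of $z',z''$ lies on the sphere of radius $d_{w_0 v}$ about $x_{w_0}$.

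The geometric core is that $z'$ and $z''$ are reflections of one another across the hyperplane $H$ spanned by the affinely independent centers $\{x_u\mid u\in U_v\}$. This is already visible in the proof of Lemma~\ref{atmosttwolemma}: the $K-1$ difference equations derived there force $z$ onto an affine line $\ell$ orthogonal to $H$. Writing $m=\ell\cap H$ and letting $n$ be a unit normal to $H$, every center $x_u$ lies in $H$, so for $z=m+tn$ we get $\|z-x_u\|^2=\|m-x_u\|^2+t^2$; the two solutions of the remaining quadratic are therefore $m\pm t_0 n$ for a single $t_0\ge 0$, i.e.\ symmetric about $m\in H$. Consequently $H$ is exactly the perpendicular bisector of the segment $z'z''$, and a point $y$ satisfies $\|y-z'\|=\|y-z''\|$ if and only if $y\in H$. (If $t_0=0$ then $z'=z''$ is a single point and the conclusion is immediate, so we may assume $z'\neq z''$.)

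It remains to invoke genericity. If both $z'$ and $z''$ satisfied $\|z-x_{w_0}\|=d_{w_0 v}$, then $x_{w_0}$ would be equidistant from $z'$ and $z''$, forcing $x_{w_0}\in H$. But $H$ is a fixed hyperplane of $\mathbb{R}^K$, hence a set of Lebesgue measure zero, so under the convention adopted at the end of Section~\ref{s:ddgp} the event $x_{w_0}\in H$ has probability $0$. With probability $1$ we therefore have $\|z'-x_{w_0}\|\neq\|z''-x_{w_0}\|$, so at most one of the two candidate points survives the extra equation; as every valid extension of $x$ for $G^v$ places $v$ in $\{z',z''\}$, there is at most one such extension.

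The step I expect to be delicate is the probabilistic one, because $x_{w_0}$ and the hyperplane $H$ are produced by the \emph{same} embedding $x$ and the same data $d$, so ``$x_{w_0}\in H$'' is not literally a statement about an independent random point but an algebraic condition on the instance. The clean way to handle it is to condition on the positions $\{x_u\mid u\in U_v\}$ that determine $H$ and then treat $x_{w_0}$ as the variable point, whose membership in the now-fixed null set $H$ has probability zero; equivalently, one checks that the condition $x_{w_0}\in H$ is cut out by the vanishing of a nontrivial polynomial (a Cayley--Menger-type determinant in the given distances), which therefore holds only on a Lebesgue-null subset of the parameter space. Everything else is routine once the reflection symmetry of $z',z''$ across $H$ is in hand.
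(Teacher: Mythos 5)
Your proof is correct, but it takes a genuinely different route from the paper's. The paper picks a set $S\subseteq N(v)\cap\gamma(v)$ with $|S|=K+1$ and $S\supseteq U_v$, writes the $K+1$ sphere equations \eqref{lins1}, and linearizes them exactly as in the proof of Lemma~\ref{atmosttwolemma}; since with probability~$1$ the $K+1$ centers $\{x_u\;|\;u\in S\}$ are not co-planar (this is where the strict simplex inequalities and the reference \cite{coope} enter), the resulting linear system in the $K$ coordinates of the position of $v$ is nonsingular, so uniqueness follows from linear algebra alone. You instead keep only the $K$ spheres indexed by $U_v$, obtain the two mirror candidates $z',z''$, observe that the affine hull $H$ of the centers is the perpendicular bisector hyperplane of the segment joining $z'$ and $z''$, and eliminate one candidate with the extra sphere: both can survive only if $x_{w_0}\in H$, a null event. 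Note that the two arguments rest on literally the same genericity fact --- your condition $x_{w_0}\notin H$ is precisely the paper's non-coplanarity of $S$ --- so the probabilistic content, including the delicacy you rightly flag (the positions and the hyperplane come from the same instance, so one must argue the algebraic condition is not identically satisfied, or condition on the positions determining $H$), is identical in both proofs; neither is more rigorous than the other on this point. What your version buys is geometric transparency: it makes explicit the reflection symmetry of the two candidate positions, which the paper only establishes afterwards (Lemma~\ref{reflection} and Corollary~\ref{reflectionbranch}), and it exhibits the exact pruning mechanism --- an additional distance discarding one of two mirror-image placements --- that drives the BP algorithm in the remainder of the paper. What the paper's version buys is brevity and a single uniform linearization technique covering both Lemma~\ref{atmosttwolemma} and Lemma~\ref{atmostonelemma}.
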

\begin{proof}
Consider a subset $S\subseteq N(v)\cap\gamma(v)$ such that
$|S|=K+1$ and $S\supseteq U_v$. Either there is at least one point
$x_v$ such that $(x,x_v)$ is an embedding of $G[S\cup\{v\}]$ that is
valid w.r.t.~the system:
\begin{equation}
  \forall u\in S \quad \sum_{k\le K}
    (x_{vk}^2-2x_{uk}x_{vk}+x_{uk}^2) = d_{uv}^2, \label{lins1}
\end{equation}
or the system has no solution. In the latter case, the result follows,
so we assume now that there is a point $x_v$ satisfying
(\ref{lins1}). Since the points $x_u$ are known for all $u\in S$,
(\ref{lins1}) is a quadratic system with $K$ variables and $K+1$
equations. As in the proof of Lemma \ref{atmosttwolemma}, we derive an
equivalent linear system from (\ref{lins1}). Since $d$ satisfies the
strict simplex inequalities on $U_v$ with probability~$1$ and
$S\supseteq U_v$, by \cite{coope} $\{x_u\;|\;u\in S\}$ are not
co-planar and the system has exactly one solution.  \qed
\end{proof}

\begin{lemma}
With the notation of Lemma \ref{atmosttwolemma}, if $\bar{x}$ is a
valid embedding for $G[U_v]$, then $z''$ is a reflection of $z'$ with
respect to the hyperplane through the $K$ points of $\bar{x}$.
\label{reflection}
\end{lemma}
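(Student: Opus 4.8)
The plan is to exploit the structure already uncovered in the proof of Lemma~\ref{atmosttwolemma}: the two candidate positions $z',z''$ for $x_v$ are the solutions of the system \eqref{quadsys}, whose linear part $Az=b$ pins down the affine subspace on which any valid position of $v$ must lie. First I would identify this subspace geometrically. Each linear equation obtained by subtracting the $w$-indexed equation of \eqref{prelin} from the $u$-indexed one has normal vector $x_u-x_w$; since $A$ has full rank $K-1$ by the strict simplex inequality, the solution set of $Az=b$ is a line $L$ whose direction is orthogonal to $\mathrm{span}\{x_u-x_w\;|\;u\in U_v\smallsetminus\{w\}\}$. This span is precisely the direction space of the hyperplane $H$ through the $K$ points $\{x_u\;|\;u\in U_v\}$, which is a genuine hyperplane because these points are affinely independent (again by $\Delta_{K-1}(U_v)>0$). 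Hence $L\perp H$, and both $z'$ and $z''$, being solutions of the linear part, lie on $L$.

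Next I would locate the common point $p=L\cap H$ and show that $z'$ and $z''$ are symmetric about it. The two solutions of \eqref{quadsys} are the two intersections of $L$ with any one of the spheres $\|z-x_u\|=d_{uv}$, $u\in U_v$, and the midpoint of these two intersections is the foot of the perpendicular dropped from the center $x_u$ onto $L$. Because $x_u\in H$ and $L\perp H$, this foot equals $p$ for every choice of $u$. Therefore $z'$ and $z''$ lie on $L$ symmetrically about $p$, and since $L$ is the normal line to $H$ at $p$, this says exactly that $z''$ is the reflection of $z'$ across $H$.

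Alternatively, and perhaps more transparently, I would argue by isometry. The reflection $\sigma$ across $H$ fixes every point of $H$, in particular each center $x_u$; as $\sigma$ is an isometry, $\|\sigma(z')-x_u\|=\|z'-x_u\|=d_{uv}$ for all $u\in U_v$, so $\sigma(z')$ is itself a valid position for $v$. By Lemma~\ref{atmosttwolemma} there are at most two such positions, so $\sigma(z')\in\{z',z''\}$. The case $\sigma(z')=z'$ forces $z'\in H$, hence $z'=L\cap H=p$, which collapses the two roots of \eqref{quadraticzK} into one, i.e.\ $z'=z''$; in that degenerate case the statement is trivially true, while otherwise $\sigma(z')=z''$, as required.

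The main obstacle, and really the only non-bookkeeping point, is establishing $L\perp H$ together with the coincidence of the perpendicular feet at the single point $p$, that is, recognizing that the linear subsystem of \eqref{quadsys} encodes precisely the normal line to the hyperplane spanned by the centers. Once this geometric identification is in place, both the symmetry argument and the isometry argument conclude at once; the affine independence of the centers guaranteed by $\Delta_{K-1}(U_v)>0$ is what keeps $H$ a bona fide hyperplane and $L$ a bona fide line throughout.
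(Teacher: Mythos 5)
Your proposal is correct; note that it actually contains two complete arguments, and they relate differently to the paper. Your ``alternative'' isometry argument is in essence the paper's own proof: the paper dispatches the lemma in two lines by observing that any sphere is symmetric with respect to any hyperplane through its center, hence the intersection of the $K$ spheres is symmetric with respect to the hyperplane $H$ containing all the centers. Your version supplies the bookkeeping the paper leaves implicit --- since the reflection $\sigma$ fixes each center $x_u$, the point $\sigma(z')$ again satisfies all $K$ distance equations, so by Lemma~\ref{atmosttwolemma} it must be $z'$ or $z''$, and the fixed-point case $\sigma(z')=z'$ occurs precisely when the two roots of \eqref{quadraticzK} coincide, making the claim trivial. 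Your primary argument, by contrast, is a genuinely different route: it returns to the linear subsystem $Az=b$ of \eqref{quadsys}, identifies its solution set as a line $L$ orthogonal to $H$ (because the row vectors $x_u-x_w$ span the direction space of $H$, which is a bona fide hyperplane by $\Delta_{K-1}(U_v)>0$), and then observes that $z'$ and $z''$ are the two intersections of $L$ with a sphere whose center lies on $H$, hence symmetric about the foot $p=L\cap H$ --- which is exactly reflection across $H$. What this buys is an explicit geometric picture the paper never draws (the two candidate placements of $v$ always lie on the common normal line to the hyperplane of the centers), and it recycles the full-rank machinery already established in Lemma~\ref{atmosttwolemma}; what the paper's argument (and your second one) buys is brevity and independence from coordinates and rank considerations --- invariance of each sphere under an isometry fixing its center is all that is needed.
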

\begin{proof}
Any sphere in $\mathbb{R}^K$ is symmetric with respect to any
hyperplane through its center; so the intersection of up to $K$
spheres in $\mathbb{R}^K$ is symmetric with respect to the hyperplane
containing all the centers. \qed
\end{proof}

\label{nota:reflection}Reflections with respect to hyperplanes are isometries, and can
therefore be represented by linear operators. If $a\in\mathbb{R}^K$ is
the unit normal vector to a hyperplane $H$ containing the origin,
then the reflection operator $R_0$ w.r.t.~$H$ can be expressed in
function of the standard basis by the matrix $I-2a\transpose{a}$, where
$I$ is the $K\times K$ identity matrix \cite{brady}. If $H$ is a
hyperplane with equation $\transpose{a} x = a_0$, %
with $a_0\not=0$, s.t.~$x$ is ordered and $a_i$, for some $1 \leq i \leq K$, is the nonzero coefficient of smallest index in $a$.
Then, the reflection operator $R$
acting on a point $p\in\mathbb{R}^K$ w.r.t.~$H$ is given by
$R(p) = R_0(p-\frac{a_0}{a_i}e_i)+\frac{a_0}{a_i}e_i$, where $e_i \in \mathbb{R}^K$ is the unit vector with $1$ at index $i$ and $0$ elsewhere: we first we translate $p$ so that
we can reflect it using $R_0$ w.r.t.~the translation of $H$ containing the
origin, then we perform the inverse translation of the reflection.

A formal description of the BP algorithm for the GDMDGP (which also
holds for the DDGP) is given in Alg.~\ref{alg1}. It builds a binary
search tree $\mathcal{T}=(\mathcal{V},\mathcal{A})$, directed from the
root to the leaves, whose nodes are triplets
$\alpha=(x(\alpha),\lambda(\alpha),\mu(\alpha))$. For
$\alpha\in\mathcal{T}$ we denote by $\mbox{\sf p}(\alpha)$ the unique
path from the root node {\sf r} of $\mathcal{T}$ to $\alpha$;
$x(\alpha)$ is an extension of the embedding $x^-$ found on $\mbox{\sf
  p}(\alpha^-)$, where $\alpha^-$ is the unique parent node of
$\alpha$. Next, $\lambda(\alpha)\in\{0,1\}$ distinguishes whether
$\alpha$ is a ``left'' or a ``right'' subnode of $\alpha^-$. More
precisely, let $\alpha$ be a node at level $i$ in $\mathcal{T}$,
$v=\rho^{-1}(i)$, $\bar{x}$ be a partial embedding of $G[U_v]$, and
$\transpose{a}_v x= a_{v0}$ be the equation of the ($(K-1)$-dimensional by
\eqref{ddgpeq2}) hyperplane through the points of $\bar{x}$. Assuming $u=\rho^{-1}(i-1)$, $a_v\in\mathbb{R}^K$ is oriented so that $a_v\cdot a_u \geq 0$; then:
\begin{equation}
\lambda(\alpha)=\left\{\begin{array}{lrl}
  0 & \mbox{ if } & \transpose{a}_v x(\alpha)_i\le a_{v0} \\
  1 & \mbox{ if } & \transpose{a}_v x(\alpha)_i>a_{v0}.
  \end{array}\right. \label{avorientation}
\end{equation}
Lastly, $\mu(\alpha)=\boxplus$ if $x$ is a valid extension of $x^-$,
in which case the node is said to be {\it feasible}, and
$\mu=\boxminus$ otherwise. This allows us to retrieve the set $X$ of
all valid embeddings of $G$ by simply traversing $\mathcal{T}$
backwards from the leaf nodes marked $\boxplus$ up to {\sf r}.

\begin{algorithm}[!htp]
\begin{algorithmic}[1]
\REQUIRE Partial embedding $\bar{x}$ of first $K$ vertices of $G$
\ENSURE Set $X$ of valid embeddings of $G$
\STATE Let $\alpha=(\bar{x}_1,0,\boxplus)$ and
$\alpha'=(\bar{x}_1,1,\boxminus)$ 
\STATE Initialize
  $\mathcal{V}=\{\alpha,\alpha'\}$ and 
  $\mathcal{A}=\{(\mbox{\sf r},\alpha),(\mbox{\sf r},\alpha')\}$
\FOR{$1<i\le K$}
  \STATE Let $\alpha=(\bar{x}_i,0,\boxplus)$,
  $\alpha'=(\bar{x}_i,1,\boxminus)$,
  $\beta=(\bar{x}_{i-1},0,\boxplus)$
  \STATE Let $\mathcal{V}\leftarrow\mathcal{V}\cup\{\alpha,\alpha'\}$ and 
  $\mathcal{A}\leftarrow\mathcal{A}\cup\{(\beta,\alpha),(\beta,\alpha')\}$
  \label{startbin}
\ENDFOR
\STATE {\sc BranchAndPrune($K+1$, $(\bar{x}_K,0,\boxplus)$)}
\STATE Let $X=\{x(\theta)\;|\;\theta\in\mathcal{V}\land
|N^+(\theta)|=0\land \mu(\theta)=\boxplus\}$ \label{valid1}
\STATE {\bf stop}
\STATE
\STATE {\bf function} {\sc BranchAndPrune($i$, $\beta$)}:
\IF{$i>n \vee \mu=\boxminus$}
  \RETURN
\ENDIF
\STATE Let $v=\rho^{-1}(i)$
\STATE Choose $U_v$ such that Eq.~\eqref{ddgpeq2} holds
\STATE Compute the equation $\transpose{a}_v x = a_{v0}$ of the hyperplane through $x[U_v]$
\STATE Let $Z=\{z',z''\}$ be extensions of $x(\beta)$ to $v$, and
$Z'=Z$ \label{valid2} 
\FOR{$z\in Z$}
  \IF{$\exists \{u,v\}\in E\; \|x(\beta)_u-z\|\not=d_{uv}$} \label{step5}
     \STATE Let $Z=Z\smallsetminus\{z\}$ \label{valid3}
  \ENDIF
\ENDFOR
\IF{$Z=\{z',z''\}$}
  \IF{$\transpose{a}_v z' \leq a_{v0}$}
    \STATE Let $\alpha=(z',0,\boxplus)$,
    $\alpha'=(z'',1,\boxplus)$ \label{chiinj1a} 
  \ELSE
    \STATE Let $\alpha=(z'',0,\boxplus)$,
    $\alpha'=(z',1,\boxplus)$ \label{chiinj1b}
  \ENDIF
\ELSIF{$Z=\{z\}$}
  \IF{$\transpose{a}_v z \leq a_{v0}$}
    \STATE Let $\alpha=(z,0,\boxplus)$,
    $\alpha'=(Z'\smallsetminus\{z\},1,\boxminus)$ \label{chiinj2a}
  \ELSE
    \STATE Let $\alpha=(z,1,\boxplus)$,
    $\alpha'=(Z'\smallsetminus\{z\},0,\boxminus)$ \label{chiinj2b}
  \ENDIF
\ELSE
  \RETURN \label{noinfpairs}
\ENDIF
\STATE Let $\mathcal{V}\leftarrow\mathcal{V}\cup\{\alpha,\alpha'\}$ and 
  $\mathcal{A}\leftarrow\mathcal{A}\cup\{(\beta,\alpha),(\beta,\alpha')\}$
  \label{recursbin}
\FOR{$\theta\in N^+(\beta)$ such that $\mu(\theta)=\boxplus$}
  \STATE {\sc BranchAndPrune}($i+1$, $\theta$)
\ENDFOR
\RETURN
\end{algorithmic}
\caption{The Branch and Prune algorithm.}
\label{alg1}
\end{algorithm}

We remark that Alg.~\ref{alg1} differs from the original BP
formulation \cite{lln5} because it applies to $K$ dimensions and
stores the binary search tree. In the rest of this section we list
some of the elementary properties of the BP algorithm.

\begin{lemma}
At termination of Alg.~\ref{alg1}, $X$ contains all valid embeddings
of $G$ extending $\bar{x}$.%
\label{fact1}
\end{lemma}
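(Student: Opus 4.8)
The plan is to prove Lemma~\ref{fact1} by induction on the rank $i=\rho(v)$, establishing that \emph{no} valid embedding is ever discarded by the pruning test. Concretely, I would prove the following invariant: for every valid embedding $x$ of $G$ extending $\bar{x}$ and every $i$ with $K\le i\le n$, there is a feasible node $\alpha$ at level $i$ of $\mathcal{T}$ (so $\mu(\alpha)=\boxplus$) whose associated embedding $x(\alpha)$, found along $\mbox{\sf p}(\alpha)$, agrees with $x$ on the first $i$ vertices in the order $<$. Once this invariant holds for $i=n$, such an $\alpha$ is a leaf ($|N^+(\alpha)|=0$) marked $\boxplus$, so $x=x(\alpha)$ is collected into $X$ at line~\ref{valid1}, which is exactly the conclusion.

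The base case $i=K$ is immediate from the initialization: the statements through line~\ref{startbin} insert nodes carrying the prescribed partial embedding of the first $K$ vertices, and any valid extension of $\bar{x}$ agrees with $\bar{x}$ there by hypothesis. For the inductive step, suppose the invariant holds at level $i-1$ through a feasible node $\beta$ with $x(\beta)$ agreeing with $x$ on the first $i-1$ vertices, and let $v=\rho^{-1}(i)$. When {\sc BranchAndPrune} processes $\beta$ it chooses a set $U_v$ satisfying \eqref{ddgpeq2} and computes the candidate positions $Z=\{z',z''\}$ at line~\ref{valid2}; by Lemma~\ref{atmosttwolemma} these are precisely the (at most two) points of $\mathbb{R}^K$ whose distances to the already-placed centres $x(\beta)_u$, $u\in U_v$, equal $d_{uv}$. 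Since $x$ is valid, $x_v$ satisfies $\|x_u-x_v\|=d_{uv}$ for every $u\in U_v\subseteq N(v)\cap\gamma(v)$, whence $x_v\in\{z',z''\}$.

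It then suffices to show that the candidate equal to $x_v$ survives the pruning loop at line~\ref{step5}. That loop removes a candidate $z$ only when some edge $\{u,v\}\in E$ fails $\|x(\beta)_u-z\|=d_{uv}$; since $x$ is valid this equality holds for $z=x_v$ and \emph{every} such edge, including the extra adjacent predecessors that, by Lemma~\ref{atmostonelemma}, may eliminate the other candidate when $|N(v)\cap\gamma(v)|>K$. Hence $x_v$ is retained, inserted as a feasible child $\alpha$ of $\beta$ at line~\ref{recursbin}, and the recursion descends to it, establishing the invariant at level $i$.

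The only point requiring genuine care---and the step I expect to be the main obstacle---is the \emph{completeness of the candidate set}: justifying that the two sphere intersections generated from the single chosen $U_v$ really capture \emph{every} valid placement of $v$, rather than only those constrained by $U_v$. This is exactly what Lemma~\ref{atmosttwolemma} supplies, since any valid $x_v$ must in particular respect the $K$ distances indexed by $U_v$ and therefore lies in their common intersection; the remaining adjacencies can only ever \emph{remove} candidates in the pruning loop and never introduce feasible ones, so no valid placement can escape the enumeration. With this observation the induction closes and the containment asserted in Lemma~\ref{fact1} follows.
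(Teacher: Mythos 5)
Your proof is correct and follows essentially the same route as the paper's: completeness rests on Lemma~\ref{atmosttwolemma} guaranteeing that the sphere-intersection candidates $Z$ exhaust every possible placement of $v$, together with the observation that the pruning test at Step~\ref{step5} can never eliminate the position coming from a valid embedding. The paper compresses this into a few sentences citing Lemmata~\ref{atmosttwolemma}--\ref{atmostonelemma} and Steps~\ref{valid2}, \ref{step5}--\ref{valid3}, whereas you make the underlying level-by-level induction explicit; the content is the same.
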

\begin{proof}
$Z$ exists with probability 1 by Lemma \ref{atmosttwolemma}. Every
  embedding in $X$ is valid because of Steps \ref{valid2} and
  \ref{step5}-\ref{valid3}. No other valid extension of $\bar{x}$
  exists because of Lemmata
  \ref{atmosttwolemma}-\ref{atmostonelemma}. \qed
\end{proof}

We now partition $\mathcal{V}$ in pairwise disjoint subsets
$\mathcal{V}_1,\ldots,\mathcal{V}_n$ where for all $i\le n$ the set
$\mathcal{V}_i$ contains all the nodes of $\mathcal{V}$ at level $i$
of the tree $\mathcal{T}$.

\begin{proposition}
With probability 1, there is no level $i\le n$ having two distinct
feasible nodes $\beta,\theta\in\mathcal{V}_i$ such that $|\{\alpha\in
N^+(\beta)\;|\;\mu(\alpha)=\boxplus\}|=1$ and $|\{\alpha\in
N^+(\theta)\;|\;\mu(\alpha)=\boxplus\}|=2$.
\label{uniformlevel}
\end{proposition}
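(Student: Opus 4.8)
The plan is to notice that the number of feasible children of a feasible node $\beta\in\mathcal{V}_i$ is dictated by a quantity that is constant over the whole level, so that a ``$1$ versus $2$'' discrepancy is ruled out from the start. Every child of a level-$i$ node sits at level $i+1$ and records a placement of the \emph{single} vertex $v=\rho^{-1}(i+1)$; the count $|N(v)\cap\gamma(v)|$ of its adjacent predecessors is an invariant of $G$ and $<$, hence identical for all $\beta\in\mathcal{V}_i$. I would first clear away the degenerate ranges of $i$: when $i=n$ all nodes of $\mathcal{V}_i$ are leaves, and when $i\le K$ the initialization loop of Alg.~\ref{alg1} places exactly one feasible node in each $\mathcal{V}_i$, so in both situations the hypothesis ``two distinct feasible nodes'' is vacuously false. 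It then remains to treat $K<i<n$, where $\rho(v)=i+1>K$ and so $|N(v)\cap\gamma(v)|\ge K$ by \eqref{ddgpeq1}; I would split on whether this count equals $K$ or exceeds it.

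In the case $|N(v)\cap\gamma(v)|=K$ one has $U_v=N(v)\cap\gamma(v)$, so the pruning test at Step~\ref{step5} scans no edge incident to $v$ other than the $K$ already used to build the candidate points. Hence, for any feasible $\beta$, each real solution $z',z''$ of \eqref{quadraticzK} satisfies every distance constraint at $v$ and, together with the valid embedding $x(\beta)$ of $G[\gamma(v)]$, yields a valid extension. Lemma~\ref{atmosttwolemma} then tells me that the two are distinct with probability $1$, the sole exception $z'=z''$ being the vanishing of the discriminant of \eqref{quadraticzK}, a set of Lebesgue measure zero. So with probability $1$ every feasible node of $\mathcal{V}_i$ has either $0$ feasible children (negative discriminant) or $2$ (positive discriminant), never $1$; this forbids a node $\beta$ with $|\{\alpha\in N^+(\beta)\;|\;\mu(\alpha)=\boxplus\}|=1$.

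In the complementary case $|N(v)\cap\gamma(v)|>K$ I would invoke Lemma~\ref{atmostonelemma} verbatim: with probability $1$ a valid embedding of $G[\gamma(v)]$ admits at most one valid extension to $v$, the probability-$1$ clause already absorbing the measure-zero events in which the strict simplex inequalities fail or the centers turn co-planar. Thus every feasible node of $\mathcal{V}_i$ has at most one feasible child, which forbids a node $\theta$ with $|\{\alpha\in N^+(\theta)\;|\;\mu(\alpha)=\boxplus\}|=2$. Since in each case one of the two demanded nodes cannot exist, the forbidden configuration never arises at level $i$.

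The last point is the passage from a single node to the whole tree. Because $\mathcal{T}$ is binary of height at most $n$, $\mathcal{V}$ is finite, so the offending events (a zero discriminant at some node, a degenerate simplex at some node) form a finite union of null sets whose total measure is still zero; a further finite union over the levels $i\le n$ gives the claim globally. I expect the one genuinely delicate step to be the identification, in the first case, of ``exactly one surviving child'' with the null event ``discriminant $=0$'': this is where it is essential that $U_v=N(v)\cap\gamma(v)$ leaves no extra pruning edge, for otherwise a surviving child could be lost to a distance constraint rather than to a coincidence of the two roots, and the dichotomy $0/2$ would break down.
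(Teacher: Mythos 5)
Your proof is correct and follows essentially the same route as the paper's: both split on whether $|N(v)\cap\gamma(v)|=K$ or $>K$ and invoke Lemma~\ref{atmosttwolemma} in the first case (forcing $0$ or $2$ feasible children, never $1$, since the only exception is a vanishing discriminant) and Lemma~\ref{atmostonelemma} in the second (forcing at most $1$ feasible child). Your version is slightly tidier in making explicit that $|N(v)\cap\gamma(v)|$ is a deterministic invariant of the instance, uniform across a level, where the paper phrases the same dichotomy as a conditional probability-zero argument, but the substance is identical.
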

\begin{proof}
We show that for all $i\le n$ the event of having two distinct nodes
$\beta,\theta\in\mathcal{V}_i$, with $\rho^{-1}(i)=v$, such that
$\beta$ has one feasible subnode and $\theta$ has two has probability
0. Consider $T_v=N(v)\cap\gamma(v)$: if $|T_v|=K$ then by Lemma
\ref{atmosttwolemma} $\beta$ should have exactly two feasible subnodes
with probability 1; since it only has one, the event $|T_v|=K$ occurs
with probability 0. Since $|T_v|\ge K$ by \eqref{ddgpeq2}, the event
$|T_v|>K$ occurs with probability 1. Thus by Lemma
\ref{atmostonelemma} there is at most one valid embedding extending
the partial embedding at $v$, which means that the two feasible
subnodes of $\theta$ represent the same embedding, an event that
occurs with probability 0.  \qed
\end{proof}

\section{Geometry in the BP Trees}
\label{s:symm}

Here we exhibit some geometrical symmetry properties between the valid
embeddings of a GDMDGP instance. We prove in Thm~\ref{fact5} that for
any valid embedding $y \in X$, if the BP tree branches at level $i$ on
the path to $y$, then the embedding obtained by reflecting all the
nodes after level $i$ is also valid.

To this aim, we need to emphasize those BP branchings which carry on
to feasible leaf nodes along both branches. For $y\in X$ and a vertex
$v\in V\smallsetminus V_0$ we denote $\Upsilon(y,v)$ the following
property:
\begin{quote}
  $\Upsilon(y,v)$: {\sl there are feasible leaf nodes
    $\beta,\beta'\in\mathcal{V}_n$ such that $x(\beta)=y$, 
    $\mbox{\sf p}(\beta)\cap\mathcal{V}_{\rho(v)-1}=\mbox{\sf
      p}(\beta')\cap\mathcal{V}_{\rho(v)-1}$ and $\mbox{\sf
      p}(\beta)\cap\mathcal{V}_{\rho(v)}\not=\mbox{\sf
      p}(\beta')\cap\mathcal{V}_{\rho(v)}$}.
\end{quote}
If $\Upsilon(y,v)$ holds, it is easy to show that $\mbox{\sf
  p}(\beta)\cap\mathcal{V}_{\rho(v)-1}$ contains a single feasible
node with two feasible subnodes. With $\Upsilon(y,v)$ true, we let
$R^v$ be the Euclidean reflection operator with respect to the
hyperplane through $y[U_v]$ (as defined on
p.~\pageref{nota:reflection}). Define
$\tilde{R}^v=I^{\rho(v)-1}\times (R^v)^{n-\rho(v)}$,
i.e.~$\tilde{R}^vy=(y_1,\ldots,y_{i-1},R^vy_i,\ldots, R^vy_n)$. We
remark that for all $i\le \ell\le n$ and for all
$\alpha\in\mathcal{V}_\ell$ the set $\mbox{\sf
  p}(\alpha)\cap\mathcal{V}_i$ has a unique element.

\begin{corollary}
Let $\alpha\in\mathcal{V}_{i-1}$ for some $i>1$, $v=\rho^{-1}(i)$ and
$N^+(\alpha)=\{\eta,\beta\}$ with
$\mu(\eta)=\mu(\beta)=\boxplus$. Then $x(\eta)_v=R^v x(\beta)_v$.
\label{reflectionbranch}
\end{corollary}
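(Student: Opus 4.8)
The plan is to recognize that this corollary is essentially an immediate application of Lemma \ref{reflection}, and the main work is to verify that the hypotheses of that lemma are met and that the two feasible subnodes $\eta,\beta$ correspond precisely to the two distinct extensions $z',z''$ produced in the branching step of Alg.~\ref{alg1}. First I would set $v=\rho^{-1}(i)$ and observe that since $\alpha\in\mathcal{V}_{i-1}$ has two feasible subnodes $\eta$ and $\beta$ (i.e.\ $|N^+(\alpha)|=2$ with both marked $\boxplus$), we are in the case handled by Steps \ref{chiinj1a}--\ref{chiinj1b} of the algorithm, where $Z=\{z',z''\}$ survives the pruning loop. By construction, the two children carry exactly the two positions $x(\eta)_v,x(\beta)_v\in\{z',z''\}$, the two intersection points of the $K$ spheres centered at the embedded vertices of $U_v$.

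Next I would invoke Lemma \ref{atmosttwolemma}: because the branching actually produced two feasible children, we have $|N(v)\cap\gamma(v)|=K$ (by Proposition \ref{uniformlevel} and Lemma \ref{atmostonelemma}, a node with two feasible subnodes cannot have $|N(v)\cap\gamma(v)|>K$ with probability $1$), so $U_v=N(v)\cap\gamma(v)$ and $z',z''$ are exactly the two solutions of the quadratic \eqref{quadraticzK}. The embedding $x(\alpha)$ restricted to the $K$ vertices of $U_v$ is a valid embedding of $G[U_v]$, which is precisely the hypothesis $\bar x$ needed for Lemma \ref{reflection}. That lemma then states that $z''$ is the reflection of $z'$ across the hyperplane through the $K$ points $x(\alpha)[U_v]=y[U_v]$, i.e.\ $z''=R^v z'$, where $R^v$ is the reflection operator defined on p.~\pageref{nota:reflection}.

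Finally I would conclude by noting that since $R^v$ is an involution ($R^v\circ R^v=I$, as it is a reflection), the relation $z''=R^v z'$ is symmetric: whichever of $\eta,\beta$ carries $z'$ and whichever carries $z''$, we obtain $x(\eta)_v=R^v x(\beta)_v$. This is exactly the claimed identity.

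\emph{Main obstacle.} The only genuinely delicate point is the bookkeeping to ensure that the two children really do hold the two \emph{distinct} sphere-intersection points and that Lemma \ref{reflection} applies verbatim; this requires pinning down that $|N(v)\cap\gamma(v)|=K$ in the two-children case (so that the situation is governed by Lemma \ref{atmosttwolemma} rather than the single-solution Lemma \ref{atmostonelemma}) and that the hyperplane used to define $R^v$ coincides with the hyperplane through $y[U_v]$ appearing in Lemma \ref{reflection}. Everything else is a direct substitution into the already-established lemmata, so I expect the proof to be short.
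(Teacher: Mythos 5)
Your proposal is correct and takes essentially the same route as the paper, whose entire proof is the one-line observation that the statement is a direct corollary of Lemma \ref{reflection}. One remark: your detour through Proposition \ref{uniformlevel} and Lemma \ref{atmostonelemma} to establish $|N(v)\cap\gamma(v)|=K$ is unnecessary and imports an unneeded probability-one qualifier, since Alg.~\ref{alg1} always computes $z',z''$ as the intersection of the $K$ spheres centered at $x[U_v]$ with $|U_v|=K$ by \eqref{ddgpeq2}, so Lemma \ref{reflection} applies to the two children's positions deterministically, regardless of how many adjacent predecessors $v$ has.
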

\begin{proof}
This is a corollary to Lemma \ref{reflection}.
\end{proof}

\begin{lemma}\label{fact4}
Let $\alpha\in\mathcal{V}_{i-1}$ for some $i>1$ such that $N^+(\alpha) = \{\eta',\beta'\}$, $u=\rho^{-1}(i)$; $v>u$ with $\rho(v)=\ell$, and consider two feasible nodes $\eta,\beta\in\mathcal{V}_\ell$ such that $\eta' = \mathsf{p}(\eta) \cap \mathcal{V}_{i}$ and $\beta' = \mathsf{p}(\beta) \cap \mathcal{V}_{i}$.
Then, with probability~$1$, the following statements are equivalent:
  \begin{enumerate}[label=\textnormal{(\emph{\roman*})},leftmargin=*,widest=iii]
  \item\label{it:fact4-1} $\forall\;i \leq j \leq \ell$, $x(\beta'')_w=R^u x(\eta'')_w$, where $\eta'' = \mathsf{p}(\eta) \cap \mathcal{V}_{j}$, $\beta'' = \mathsf{p}(\beta) \cap \mathcal{V}_{j}$, and $w = \rho^{-1}(j)$;
  \item\label{it:fact4-2} $\forall\;i \leq j \leq \ell$, $\lambda(\eta'') = 1 - \lambda(\beta'')$, with $\eta'' = \mathsf{p}(\eta) \cap \mathcal{V}_{j}$ and $\beta'' = \mathsf{p}(\beta) \cap \mathcal{V}_{j}$.
  \end{enumerate}
\end{lemma}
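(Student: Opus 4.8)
\emph{Overall strategy.} The plan is to turn the two universally quantified conditions into a single statement proved level-by-level. Write $p_j$ for the one-level instance of \ref{it:fact4-1} at level $j$ (that is, $x(\beta'')_w=R^u x(\eta'')_w$ for $w=\rho^{-1}(j)$) and $l_j$ for the one-level instance of \ref{it:fact4-2} (that is, $\lambda(\eta'')=1-\lambda(\beta'')$); thus \ref{it:fact4-1} is $\bigwedge_{j=i}^{\ell}p_j$ and \ref{it:fact4-2} is $\bigwedge_{j=i}^{\ell}l_j$. I would isolate a single \emph{key step}: if $p_k$ holds for all $i\le k\le j$, then $p_{j+1}\Leftrightarrow l_{j+1}$. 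Granting this, both implications of the lemma follow by an easy induction on $j$: the base level $i$ gives $p_i$ from Corollary~\ref{reflectionbranch} and $l_i$ from the branching rule of Alg.~\ref{alg1} (the two children of $\alpha$ receive labels $0$ and $1$), and the key step then propagates the equivalence upward in both directions.

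\emph{Geometry of one level.} Fix $j$ and suppose $p_k$ holds for all $i\le k\le j$; let $w^+=\rho^{-1}(j+1)$. First I would check that the positions of \emph{all} predecessors used to place $w^+$ reflect under $R^u$. By \eqref{ddgpeq2} these predecessors sit at levels in $[\,j+1-K,\,j\,]$; those at levels $\ge i$ reflect by hypothesis, while those at levels $<i$ necessarily belong to $U_u$ and therefore lie on the mirror of $R^u$, so they are fixed by $R^u$ and reflect trivially. Consequently the local hyperplane $H_\beta$ at $w^+$ on the $\beta$-branch is the $R^u$-image of the local hyperplane $H_\eta$ on the $\eta$-branch, and---since $R^u$ is an isometry---the two candidate placements on the $\beta$-branch are exactly the $R^u$-images of the two candidates on the $\eta$-branch (cf.\ Lemma~\ref{reflection}). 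The one analytic identity I would record is that a reflection preserves signed distance once the normal is reflected too: writing $R^u(p)=R_0 p+t$, for any point $p$ and any $q\in H_\eta$ one has $(R_0 a_\eta)\cdot R^u(p)-(R_0 a_\eta)\cdot R^u(q)=a_\eta\cdot(p-q)$.

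\emph{Orientation bookkeeping (the crux).} Because the labels in \eqref{avorientation} are read off with a specifically oriented normal, the equivalence hinges on comparing the \emph{orientations} of $a_\eta^{k}$ and $a_\beta^{k}$, not merely the hyperplanes. Define $\sigma_k\in\{-1,+1\}$ by $a_\beta^{k}=\sigma_k R_0 a_\eta^{k}$. At level $i$ the local hyperplane is the mirror itself, so $a_\eta^{i}=a_\beta^{i}=a$ while $R_0 a=-a$, giving $\sigma_i=-1$. For the inductive propagation I would use the orientation rule $a_v\cdot a_{\rho^{-1}(\rho(v)-1)}\ge 0$: from $a_\beta^{k}=\sigma_k R_0 a_\eta^{k}$ and orthogonality of $R_0$ one gets $a_\beta^{j+1}\cdot a_\beta^{j}=\sigma_{j+1}\sigma_j\,(a_\eta^{j+1}\cdot a_\eta^{j})$, and since $a_\eta^{j+1}\cdot a_\eta^{j}>0$ with probability $1$ (equality is a measure-zero event), the sign rule on the $\beta$-branch forces $\sigma_{j+1}=\sigma_j=-1$. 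Feeding $\sigma_{j+1}=-1$ into the signed-distance identity shows that each $\beta$-candidate carries the \emph{opposite} label to the $\eta$-candidate it mirrors; hence the $\beta$-branch choosing the $R^u$-image of the $\eta$-branch's placement (i.e.\ $p_{j+1}$) is equivalent to the two selected labels being opposite (i.e.\ $l_{j+1}$). This is exactly the key step, and it closes the induction.

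\emph{Main obstacle.} I expect the real work to be this orientation bookkeeping rather than the reflection geometry itself. Signed distance is preserved under $R^u$ only up to the sign $\sigma_k$, and if $\sigma_k$ were $+1$ the conclusion would invert ($p_{j+1}$ would match \emph{equal} labels). Establishing $\sigma_i=-1$ at the mirror and showing it is preserved by the orientation convention of \eqref{avorientation}---which is precisely where the probability-$1$ hypothesis enters, to rule out orthogonal consecutive normals and candidates landing exactly on a hyperplane---is therefore the crux of the argument.
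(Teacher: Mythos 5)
Your proposal is correct and takes essentially the same route as the paper's proof: a level-by-level induction whose real content is an auxiliary orientation invariant relating the oriented normals of the local hyperplanes on the two branches, which is exactly what the paper tracks via its extra statement \emph{(iii)} (that $a_u\cdot a_w^0=a_u\cdot a_w^1$ for every level $w$). Your sign bookkeeping $a_\beta^k=\sigma_k R_0 a_\eta^k$ with $\sigma_k=-1$ is equivalent (with probability $1$) to that dot-product equality, and your algebraic propagation $\sigma_{j+1}=\sigma_j$ via the orientation rule replaces, in a cleaner form, the paper's angle/cosine argument ($\theta^0_v=-\theta^1_v$).
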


\begin{proof}
Let $\transpose{a_v^0}x=a_{v0}^0$, $\transpose{a_v^1}x=a_{v0}^1$ be the equations of the hyperplanes $H_{\eta}, H_{\beta}$ defined respectively by $x(\eta)[U_v]$ and $x(\beta)[U_v]$, with the normals oriented as explained on page~\pageref{nota:reflection}. We prove by induction on $\ell - i$ that the following assumption is equivalent to~\ref{it:fact4-1} and~\ref{it:fact4-2}:
\begin{enumerate}[label=\textnormal{(\emph{\roman*})},start=3,leftmargin=*,widest=iii]
  \item\label{it:fact4-3} for all $i \leq j \leq \ell$, $x(\beta'')_w=R^u x(\eta'')_w$ and $a_u \cdot a_w^0 = a_u \cdot a_w^1$, where $\eta'' = \mathsf{p}(\eta) \cap \mathcal{V}_{j}$, $\beta'' = \mathsf{p}(\beta) \cap \mathcal{V}_{j}$, $w = \rho^{-1}(j)$, and $a_w^0$ and $a_w^1$ are the normal vectors of the hyperplanes $H_{\eta''}$ and $H_{\beta''}$ oriented as usual.
\end{enumerate}

If $\ell = i$, then~\ref{it:fact4-1},~\ref{it:fact4-2}, and~\ref{it:fact4-3} hold simultaneously. Indeed, $\eta = \eta'$ and $\beta = \beta'$, hence $x(\beta)_v=R^u x(\eta)_v$ (Lemma~\ref{reflection}) and $\lambda(\eta) = 1 - \lambda(\beta)$ (Alg.~\ref{alg1}, Steps~\ref{chiinj1a} and~\ref{chiinj1b}).
In addition, we have $H_{\eta} = R^u H_{\beta}$, therefore $|a_u \cdot a_v^0| = |a_u \cdot a_v^1|$. Because the orientation of $a_v^0, a_v^1$ is such that $a_u \cdot a_v^0, a_u \cdot a_v^1 \geq 0$, the result holds.

Assume that the equivalence stated above holds for level $\ell-1$, we show that it is still the case at level $\ell$. In the sequel, denote $t = \rho^{-1}(\ell-1)$.

\smallskip
\noindent \ref{it:fact4-1} $\Leftrightarrow$ \ref{it:fact4-2}.
Suppose for all $i \leq j < \ell$, $x(\beta'')_w=R^u x(\eta'')_w$ and $\lambda(\eta'') = 1 - \lambda(\beta'')$ (by the induction hypothesis, both statements are equivalent). Hence, $H_{\eta''} = R^u H_{\beta''}$ holds for all $j$, because the $K$ points generating the hyperplanes either belong to $H_{\alpha}$, or are reflections of each other. This is true in particular if we choose $\eta'', \beta'' \in \mathcal{V}_{\ell-1}$. In addition, if we use the induction hypothesis \ref{it:fact4-1} $\Rightarrow$ \ref{it:fact4-3}), we have $a_u \cdot a_t^0 = a_u \cdot a_t^1$, so $a_t^0, a_t^1$ are directed similarly w.r.t $a_u$, and $\lambda(\eta) = 1 - \lambda(\beta)$ if and only if $x(\beta)_v=R^u x(\eta)_v$ (see Fig.~\ref{mainfig2}).
\begin{figure}[!ht]
\subfloat[$\transpose{a_v^0} x(\eta)_v > a_{v0}^0$ and $\transpose{a_v^1} x(\beta)_v > a_{v0}^1$]{\label{fig:reflection1b}
\psfrag{alpha}{$\natop{\eta}{\lambda=0}$}
\psfrag{beta}{$\natop{\beta}{\lambda=0}$}
\psfrag{ell}{$x(\beta)_v$}
\psfrag{yl}{$x(\eta)_v$}
\psfrag{refl}{\tiny $\neg$ reflection}
\psfrag{Ha}{$H_{\alpha}$}
\psfrag{He2}{$H_{\eta''}$}
\psfrag{Hb2}{$H_{\beta''}$}
\psfrag{e2}{$\eta''$}
\psfrag{b2}{$\beta''$}
\psfrag{au}{$a_u$}
\psfrag{av1}{$a_v^1$}
\psfrag{av0}{$a_v^0$}
\includegraphics[width=6cm]{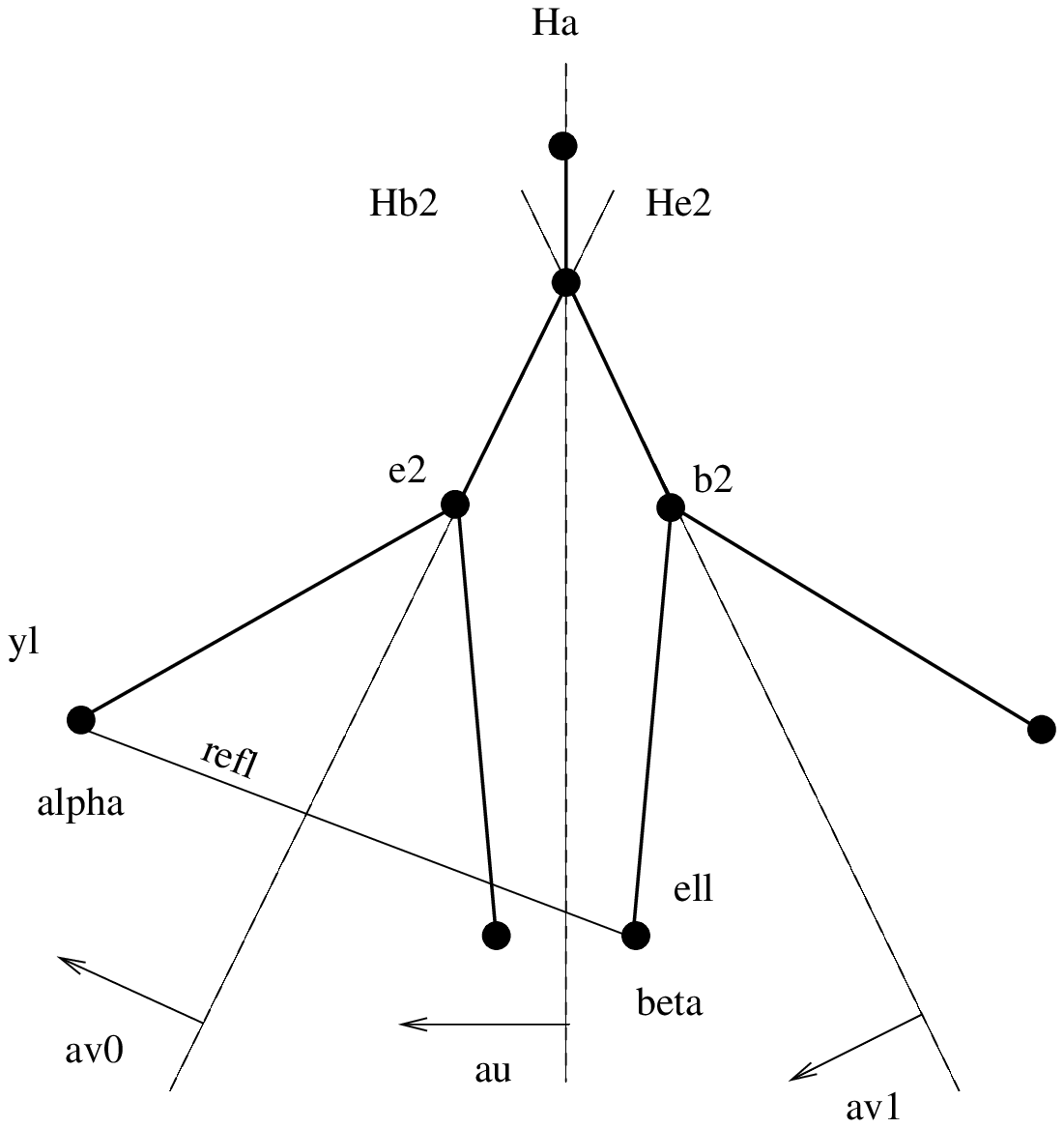}}
\hfill
\subfloat[$\transpose{a_v^0} x(\eta)_v > a_{v0}^0$ and $\transpose{a_v^1} x(\beta)_v < a_{v0}^1$]{\label{fig:reflection2b}
\psfrag{alpha}{$\natop{\eta}{\lambda=0}$}
\psfrag{beta}{$\natop{\beta}{\lambda=1}$}
\psfrag{ell}{$x(\beta)_v$}
\psfrag{yl}{$x(\eta)_v$}
\psfrag{reflection}{\tiny reflection}
\psfrag{Ha}{$H_{\alpha}$}
\psfrag{He2}{$H_{\eta''}$}
\psfrag{Hb2}{$H_{\beta''}$}
\psfrag{e2}{$\eta''$}
\psfrag{b2}{$\beta''$}
\psfrag{au}{$a_u$}
\psfrag{av1}{$a_v^1$}
\psfrag{av0}{$a_v^0$}
\includegraphics[width=6cm]{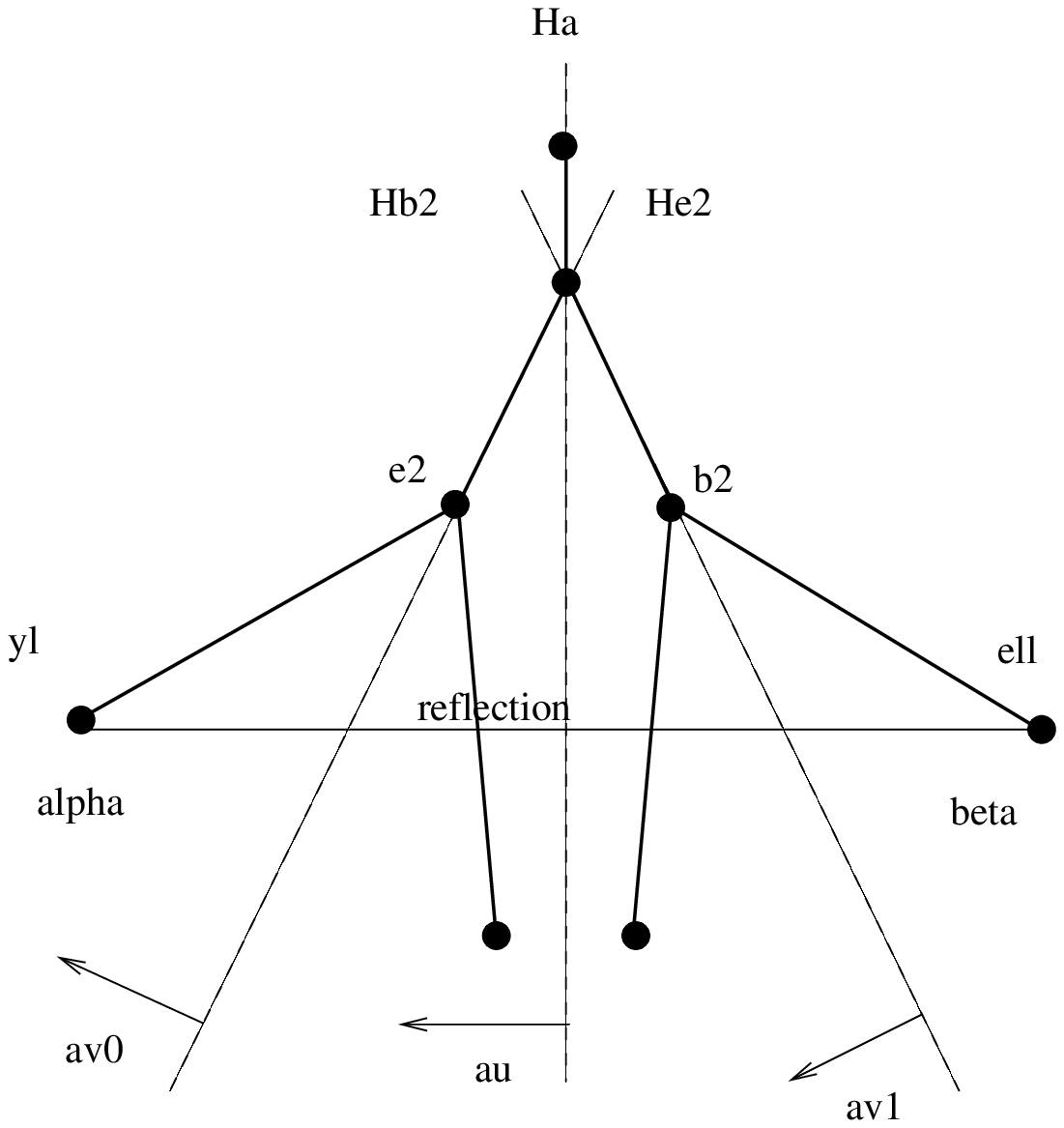}}
\caption{Proof of Lemma \ref{fact4}: Case \eqref{fig:reflection1b}
  shows the contradiction deriving from
  $\lambda(\eta)=\lambda(\beta)=0$ (or $x(\beta)_v \neq R^u
  x(\eta)_v$), and case \eqref{fig:reflection2b} the situation that
  actually occurs.}
\label{mainfig2}
\end{figure}

\smallskip
\noindent \ref{it:fact4-2} $\Rightarrow$ \ref{it:fact4-3}.
Suppose for all $i \leq j \leq \ell$, $\lambda(\eta'') = 1 - \lambda(\beta'')$. By the previous result, we also know that $i \leq j \leq \ell$, $x(\beta'')_w=R^u x(\eta'')_w$.
It remains to prove that $a_u \cdot a_v^0 = a_u \cdot a_v^1$, i.e.~that the angles $\theta^0_v$ and $\theta^1_v$ formed by these vectors have the same cosine.
Notice once again that $H_{\eta} = R^u H_{\beta}$. 
By induction, we know that the angles $\theta^0_t$, $\theta^1_t$ formed by $a_u$ and respectively $a_t^0$, $a_t^1$, have same cosine. With probability~$1$, the hyperplanes $H_{\eta}$, $H_{\beta}$ are not parallel, hence their normal vectors cannot be identical, therefore, $\theta^0_t = -\theta^1_t$ (see the illustration on Fig.~\ref{fig:angles}).
\begin{figure}[!ht]
  \begin{center}
  \begin{tikzpicture}[>=stealth]
  \draw[dashed] (0,-2) --++(0,4) node[above] {$H_{\alpha}$};
  \draw[thick,->] (0,0) -- ++(0:2) node[right] {$a_u$};

  \draw[thick,->] (0,0) -- ++(60:2) node[above right] {$a_t^0$};
  \draw[->] (0,0)+(0:0.5) arc (0:60:0.5);
  \node at (30:0.75) {$\theta^0_t$};
  \draw[thick,->] (0,0) -- ++(140:2) node[above left] {$a_v^0$};
  \draw[->] (0,0)+(60:0.5) arc (60:140:0.5);
  \node at (100:0.75) {$\theta^0$};
  \draw[->] (0,0)+(0:1.25) arc (0:140:1.25);
  \node at (70:1.5) {$\theta^0_v$};

  \draw[thick,->] (0,0) -- ++(-60:2) node[below right] {$a_t^1$};
  \draw[->] (0,0)+(0:0.5) arc (0:-60:0.5);
  \node at (-30:0.75) {$\theta^1_t$};
  \draw[thick,->] (0,0) -- ++(-140:2) node[below left] {$a_v^1$};
  \draw[->] (0,0)+(-60:0.5) arc (-60:-140:0.5);
  \node at (-100:0.75) {$\theta^1$};
  \draw[->] (0,0)+(0:1.25) arc (0:-140:1.25);
  \node at (-70:1.5) {$\theta^1_v$};
  \end{tikzpicture}
  \caption{Proof of Lemma \ref{fact4}: illustration of the fact that $a_u \cdot a_v^0 = a_u \cdot a_v^1$.}
  \label{fig:angles}
  \end{center}
\end{figure}
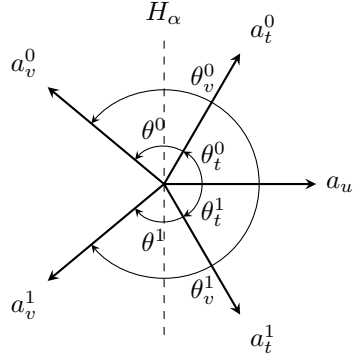
Denote $\theta^0$, $\theta^1$ the angles formed respectively by
$a_t^0$ and $a_v^0$, and by $a_t^1$ and $a_v^1$.  We also have,
$H_{\eta''} = R^u H_{\beta''}$, where $\eta'', \beta'' \in
\mathcal{V}_{\ell - 1}$, hence the normal vectors of these 4
hyperplanes are also symmetric, which implies $\theta^0 = -\theta^1$
or $\theta^0 = \pi - \theta^1$.  By the definition of $a_v^0$ and
$a_v^1$ (page~\pageref{nota:reflection}), since the scalar products
are positive, $-\pi/2 \leq \theta^0, \theta^1 \leq \pi/2$, thus
$\theta^0 = -\theta^1$.  Therefore, $\theta^0_v = \theta^0_t +
\theta^0 = -\theta^1_t - \theta^1 = -\theta^1_v$, which concludes this
part of the proof.
\smallskip
\noindent \ref{it:fact4-3} $\Rightarrow$ \ref{it:fact4-1}.
Obvious. \qed
\end{proof}

\begin{proposition}
Consider a subtree $\mathcal{T}'$ of $\mathcal{T}$ consisting of $K+2$
consecutive levels $i-K-1,\ldots,i$ (where $i\ge 2K+1$), rooted at a
single node $\eta$ and such that all nodes at all levels are marked
$\boxplus$. Let $p=2^{K+1}$ and consider the set $Y'=\{y_j\;|\;j\le
p\}$ of partial embeddings of $G$ at the leaf nodes
$\{\alpha_j\;|\;j\le p\}$ of $\mathcal{T}'$. Let $u=\rho^{-1}(i-K-1)$
and $v=\rho^{-1}(i)$. Then with probability 1 there are two distinct
positive reals $r,r'$ such that
$\|y_j(\alpha_j)_u-y_j(\alpha_j)_v\|\in\{r,r'\}$ for all $j\le p$.
\label{tworadii}
\end{proposition}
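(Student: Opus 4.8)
The plan is to reduce the claim to a statement about a single rigid $K$-simplex whose metric data is independent of which branch of $\mathcal{T}'$ one follows. First I would name the vertices encountered along any root-to-leaf path of $\mathcal{T}'$: set $w_t=\rho^{-1}(i-K-1+t)$ for $0\le t\le K+1$, so that $w_0=u$ and $w_{K+1}=v$, and write $(y_j)_u,(y_j)_v$ for the positions of $u$ and $v$ in the leaf embedding $y_j$. Using the contiguity clause of \eqref{ddgpeq2} together with $|U_{w_t}|=K$, I would observe that $U_{w_t}=\{w_{t-K},\ldots,w_{t-1}\}$ whenever these indices are nonnegative; the hypothesis $i\ge 2K+1$ is exactly what guarantees that all the predecessors referenced here have rank $\ge 1$ and that each $U_{w_t}$ has full size $K$. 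Two consequences follow: $U_v=\{w_1,\ldots,w_K\}$, and every pair inside $\{u,w_1,\ldots,w_K\}$ is an edge of $G$, since any two of $w_0,\ldots,w_K$ differ in rank by at most $K$ (so $w_j\in U_{w_{j'}}$ for $j<j'$, and $u=w_0\in U_{w_j}$ for $1\le j\le K$). Because every leaf node is marked $\boxplus$, each $y_j$ is a valid embedding, so all $\binom{K+1}{2}$ distances among $\{u,w_1,\ldots,w_K\}$ and all $K$ distances from $v$ to $w_1,\ldots,w_K$ equal the corresponding prescribed edge weights, and are therefore the same for every $j$.

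The key point is then that the sub-configuration $\{u,w_1,\ldots,w_K,v\}$ realized in $y_j$ has all of its $\binom{K+2}{2}$ pairwise distances fixed and independent of $j$, with the single exception of $\|(y_j)_u-(y_j)_v\|$, which is precisely the quantity to be controlled. I would formalize this by sending, for each $j$, the simplex $\{u,w_1,\ldots,w_K\}$ of $y_j$ to a fixed reference copy by a congruence $T_j$ of $\mathbb{R}^K$; this is legitimate because a nondegenerate $K$-simplex is determined up to congruence by its edge lengths, which are leaf-independent. Under $T_j$ the image of $(y_j)_v$ lies on the intersection of the $K$ spheres centred at the (now fixed) images of $w_1,\ldots,w_K$ with the fixed radii $d_{w_\ell v}$. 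By the strict simplex inequality on $U_v$ and Lemma \ref{atmosttwolemma}, this intersection is a fixed pair of points $\{z',z''\}$, which by Lemma \ref{reflection} are reflections of one another across $H=\operatorname{aff}(w_1,\ldots,w_K)$. Since $T_j$ is an isometry and the reference image $\hat u$ of $u$ is fixed, $\|(y_j)_u-(y_j)_v\|=\|\hat u-T_j(y_j)_v\|\in\{\|\hat u-z'\|,\|\hat u-z''\|\}=:\{r,r'\}$. The two reals $r,r'$ depend only on the fixed distance data, hence are identical for every $j$, which proves the membership claim; both are moreover attained, since the final branching of $\mathcal{T}'$ places $v$ at $z'$ on one child and at $z''$ on the other.

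Finally I would establish $r\ne r'$ and $r,r'>0$ with probability $1$. The two values coincide exactly when $\hat u\in H$, i.e.\ when $\Delta_K(\{u,w_1,\ldots,w_K\})=0$; as noted, every distance of this $(K{+}1)$-point set is a prescribed edge weight, so its Cayley--Menger volume is a fixed polynomial in the edge weights and vanishes only on a set of Lebesgue measure zero, an event of probability $0$ under the genericity assumption. Positivity of $r,r'$ is the generic statement $u\notin\{z',z''\}$. The step that must be handled with care, and the genuine content of the argument, is the rank bookkeeping of \eqref{ddgpeq2}: pinning down $U_v=\{w_1,\ldots,w_K\}$ and the full mutual adjacency of $\{u,w_1,\ldots,w_K\}$ is where the contiguity hypothesis (GDMDGP rather than the general DDGP) and the bound $i\ge 2K+1$ are truly used. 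Once this is in place, the reflection/congruence argument is routine given Lemmas \ref{atmosttwolemma} and \ref{reflection}.
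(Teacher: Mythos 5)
Your proof is correct, but it follows a genuinely different route from the paper's. The paper argues via the tree symmetries: it orders the $2^{K+1}$ leaves, invokes Lemma~\ref{fact4} to identify the permutation action of the reflections $R^w$, for $i-K\le\rho(w)\le i-1$, on the leaf embeddings, and uses the fact that each such reflection leaves the position of $u$ unchanged (again a consequence of the contiguity window, since $u\in U_w$) so that leaves paired by a reflection share the same $u$--$v$ distance; the number of distinct distances is thus halved $K$ times, from $2^{K+1}$ down to $2$, and the single-value case is excluded as a probability-$0$ collinearity. You instead make a rigidity argument on the realized configurations: the window clause of \eqref{ddgpeq2}, with $|U_{w_t}|=K$, forces $U_{w_t}$ to be exactly the $K$ immediate predecessors of $w_t$ (this is where $i\ge 2K+1$ enters), hence full mutual adjacency of $\{u,w_1,\ldots,w_K\}$ and adjacency of $v$ to $w_1,\ldots,w_K$; therefore the $(K+2)$-point configuration has all pairwise distances pinned to leaf-independent edge weights except $\|y_u-y_v\|$, and after transporting by a congruence to a fixed frame, Lemmas~\ref{atmosttwolemma} and~\ref{reflection} leave exactly two candidate positions $z',z''$ for $v$, whence the two radii, which coincide only if $u$ lies on the hyperplane through $w_1,\ldots,w_K$ (a probability-$0$ event). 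Your route is more elementary and self-contained---it needs neither Lemma~\ref{fact4} nor any induction over levels, and it exhibits $r,r'$ explicitly as $\|\hat u-z'\|,\|\hat u-z''\|$---whereas the paper's orbit argument is the one that extends directly, by induction on the number of extra levels, to the $2^q$-radii statement of Corollary~\ref{2r1} and meshes with the group-action machinery behind Theorem~\ref{mainthm}. Both arguments use the GDMDGP contiguity condition in an essential way, as you correctly stress, and both treat genericity at the same informal level: your appeal to the Cayley--Menger determinant vanishing on a measure-zero set of edge weights plays exactly the role of the paper's collinearity claim.
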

\begin{proof}
Fig.~\ref{f:2r} shows a graphical proof sketch for $K=2$.
\begin{figure}[!ht]
\begin{center}
\psfrag{e}{$\eta$}
\psfrag{b}{$\beta_1$}
\psfrag{b'}{$\beta_2$}
\psfrag{t1}{$\theta_1$}
\psfrag{t2}{$\theta_2$}
\psfrag{t1'}{$\theta_3$}
\psfrag{t2'}{$\theta_4$}
\psfrag{a1}{$\alpha_1$}
\psfrag{a2}{$\alpha_2$}
\psfrag{a3}{$\alpha_3$}
\psfrag{a4}{$\alpha_4$}
\psfrag{a1'}{$\alpha_5$}
\psfrag{a2'}{$\alpha_6$}
\psfrag{a3'}{$\alpha_7$}
\psfrag{a4'}{$\alpha_8$}
\psfrag{Rt}{$R^t$}
\psfrag{Rw}{$R^w$}
\psfrag{Rv}{$R^v$}
\psfrag{u}{$u$}
\psfrag{w}{$w$}
\psfrag{t}{$t$}
\psfrag{v}{$v$}
\psfrag{r}{$r$}
\psfrag{r'}{$r'$}
\includegraphics[width=11cm]{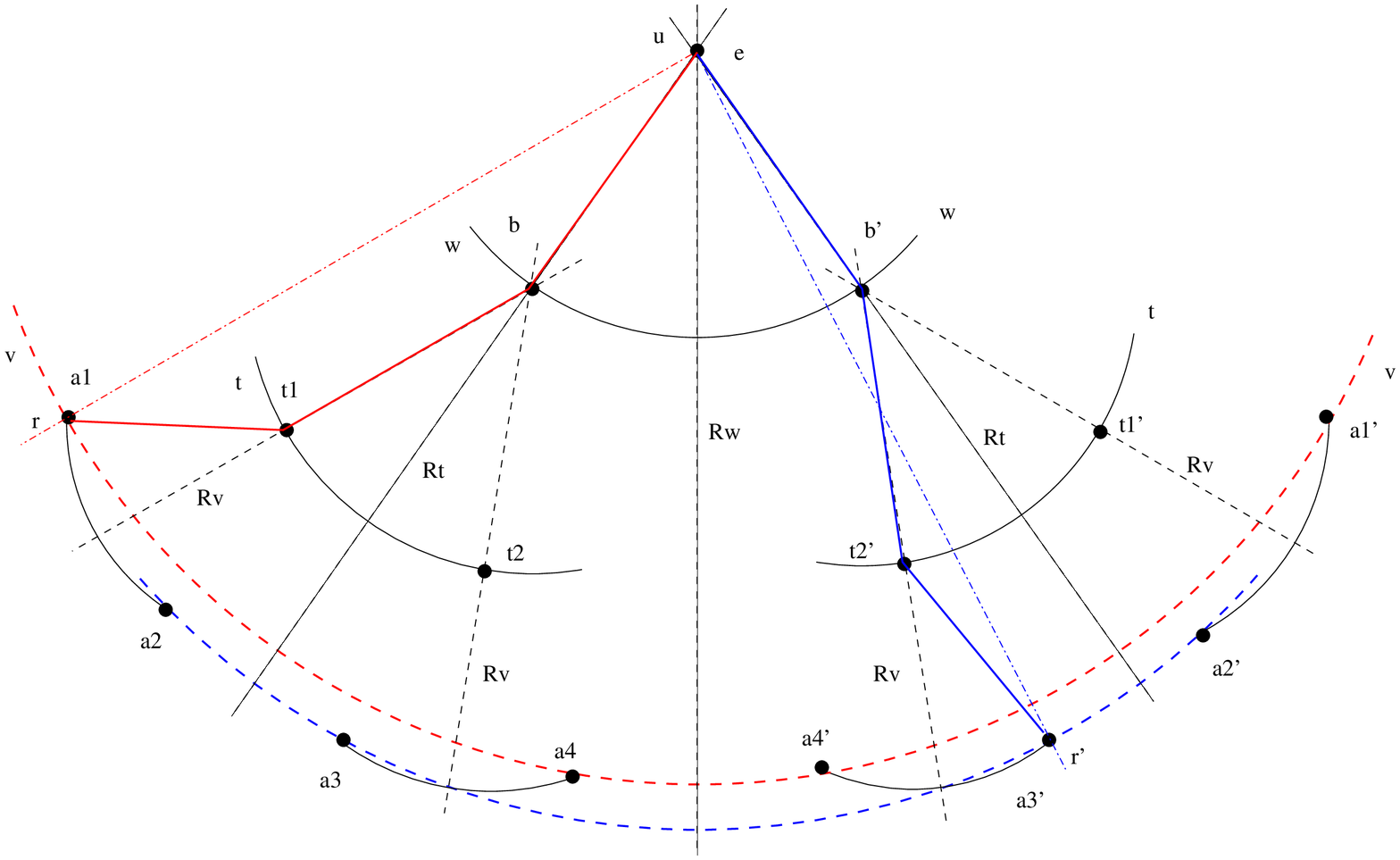}
\end{center}
\caption{Proof of Prop.~\ref{tworadii} in $\mathbb{R}^2$. The
  arrangement of three segments gives rise, in general, to two
  distances $r,r'$ between root and leaves.}
\label{f:2r}
\end{figure}  
With a slight abuse of notation, for a vertex $w\in V$ in this proof
we denote by $R^{w}$ the set of all reflections at level $w$. We order
the $\alpha_j$ nodes so that the action of $R^v$ on
$(\alpha_1,\ldots,\alpha_p)$ is the permutation $\prod_{j\!\!\mod
  2=1}(j,j+1)$. Let $t=\rho^{-1}(i-1)$.  Since all nodes are feasible,
$\|y_j(\alpha_j)_v-y_j(\alpha_j)_t\|=d_{tv}$ and
$\|y_j(\alpha_j)_u-y_j(\alpha_j)_t\|=d_{ut}$ for all $j\le p$ (we
remark that $\{t,v\}$ and $\{u,t\}$ must be in $E$ by the definition
of the GDMDGP). With probability 1, the segments through
$y_j(\alpha_j)_u$ and $y_j(\alpha_j)_t$ (where $j\le p$) do not
respectively lie within the hyperplanes defining the reflections
$R^v$; and the same holds for the segments through $y_j(\alpha_j)_t$
and $y_j(\alpha_j)_v$. Thus, there is a set $Q$ of positive reals
$r_1,\ldots,r_p$ s.t.~for all $j\le p$ with $j\mod 2=1$ we have
$\|y_j(\alpha_j)_u-y_i(\alpha_j)_v\|=r_j$ and
$\|y_{j+1}(\alpha_{j+1})_u-y_{j+1}(\alpha_{j+1})_v\|=r_{j+1}$, which
shows $|Q|\le p=2^{K+1}$. By Lemma \ref{fact4} the action of $R^t$ on
$(\alpha_1,\ldots,\alpha_p)$ is the permutation $\prod_{j\!\!\mod
  4=1}(j,j+3)(j+1,j+2)$: this implies that $r_j=r_{j+3}$ and
$r_{j+1}=r_{j+2}$ for all $j\mod 4=1$, which shows $|Q|\le
p/2=2^{K}$. Inductively, for a vertex $w$ s.t.~$i-K\le \rho(w)\le i-1$
the action of $R^w$ is $\prod_{j\!\!\mod
  2^{j-\rho(w)+1}}(j,j+2^{i-\rho(w)+1}-1)(j+1,j+2^{i-\rho(w)+1}-2)\cdots(j+2^{i-\rho(w)}-1,j+2^{i-\rho(w)})$,
which implies that $|Q|\le 2^{K+1-i+\rho(w)}$. Therefore $\rho(w)=i-K$
proves that $|Q|\le 2$. The case $|Q|=1$ can only occur if
$y_j(\alpha_j)_u$, $y_j(\alpha_j)_t$ and $y_j(\alpha_j)_v$ are
collinear for all $j\le p$, an event that occurs with probability
0. \qed
\end{proof}

Prop.~\ref{tworadii} is useful in order to show that certain
configurations of nodes within $\mathcal{T}$ can only occur with
probability 0.
\begin{example}
Consider a subtree $\mathcal{T}'$ of $\mathcal{T}$ like the one in
Fig.~\ref{f:2r} embedded in $\mathbb{R}^2$, and suppose that all nodes
at level $u,w,t$ are marked $\boxplus$, and further that only one node
within $\alpha_1,\alpha_2$ is feasible, only one node within
$\alpha_3,\alpha_4$ is feasible, only one node within
$\alpha_7,\alpha_8$ is feasible, and $\alpha_5,\alpha_6$ are both
infeasible. This must be due to a distance $d_{u'v}$ with $u'\le
u$. Consider now a circle $C$ completely determined by its center at
$y_1(\alpha_1)_{u'}$ and its radius $d_{u'v}$; if $C$ also contains
the points at the nodes $\alpha_1,\alpha_4,\alpha_8$ or the points at
the nodes $\alpha_2,\alpha_3,\alpha_7$ then we must have $u'=u$, in
which case also one of $\alpha_5,\alpha_6$ will be feasible (against
the hypothesis). And the probability that $C$ should contain the
points at the nodes $\alpha_1,\alpha_3,\alpha_8$ or
$\alpha_2,\alpha_4,\alpha_7$ is zero. Hence $\mathcal{T}'$ can only
occur with probability 0. \qed
\end{example}

\begin{corollary}
Consider a subtree $\mathcal{T}'$ of $\mathcal{T}$ consisting of
$K+q+1$ consecutive levels $i-K-q,\ldots,i$ (where $i\ge 2K+q$ and
$q\ge 1$), rooted at a single node $\eta$ and such that all nodes at
all levels are marked $\boxplus$. Let $p=2^{K+q}$ and consider the set
$Y'=\{y_j\;|\;j\le p\}$ of partial embeddings of $G$ at the leaf nodes
$\{\alpha_j\;|\;j\le p\}$ of $\mathcal{T}'$. Let $u=\rho^{-1}(i-K-q)$
and $v=\rho^{-1}(i)$. Then with probability 1 there is a set
$H^{uv}=\{r_j\;|\;j\le 2^q\}$ of $2^q$ distinct positive reals such
that $\|y_i(\alpha_i)_u-y_i(\alpha_i)_v\|\in H^{uv}$ for all $i\le p$.
\label{2r1}
\end{corollary}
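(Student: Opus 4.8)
The plan is to generalize the proof of Proposition~\ref{tworadii}, which is exactly the case $q=1$. Since all nodes of $\mathcal{T}'$ are marked $\boxplus$, the subtree is a complete binary tree, and I would index its $p=2^{K+q}$ leaves by their branching strings in $\{0,1\}^{K+q}$, one bit per level $i-K-q+1,\ldots,i$. By Lemma~\ref{fact4}, reflecting at a branching of level $m$ — that is, applying $\tilde{R}^w$ with $\rho(w)=m$ — permutes the leaves of $\mathcal{T}'$ by flipping every bit in positions $m,\ldots,i$ (a suffix flip), and sends feasible leaves to feasible leaves.

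I would first isolate the reflections that fix the root–leaf distance $\|y(\alpha)_u-y(\alpha)_v\|$. Because the root $\eta$ is a single node, $y(\alpha)_u$ is the same point for every leaf $\alpha$. Now $R^w$ is the reflection in the hyperplane $H_w$ through $y[U_w]$, and $u\in U_w$ holds precisely for the $K$ levels $i-K-q+1\le\rho(w)\le i-q$; for each such $w$ the common point $y(\alpha)_u\in H_w$ is equidistant from $y(\alpha)_v$ and $R^w y(\alpha)_v$ (exactly as in Lemma~\ref{reflection}), so $\tilde{R}^w$ preserves the $u$–$v$ distance. Let $\Gamma$ be the group generated by these $K$ reflections. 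Writing each as the $\mathbb{F}_2$-vector of the suffix it flips, the consecutive differences produce the single-bit flips at positions $i-K-q+1,\ldots,i-q-1$ together with the suffix flip based at $i-q$; these $K$ vectors are linearly independent, so $|\Gamma|=2^K$ and $\Gamma$ acts freely on the $2^{K+q}$ leaves.

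Hence the leaves split into $2^{K+q}/2^K=2^q$ orbits, and since every generator of $\Gamma$ preserves the $u$–$v$ distance, that distance is constant on each orbit; this yields a set $H^{uv}$ with $|H^{uv}|\le 2^q$. The bound is the direct analogue of $|Q|\le 2$ in Proposition~\ref{tworadii}: the $q$ remaining ``bad'' reflections, at the bottom levels $i-q+1,\ldots,i$, are exactly those with $u\notin U_w$, and they are the ones that do not collapse the count. This combinatorial part is a routine extension of Proposition~\ref{tworadii}.

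The delicate step, which I expect to be the real obstacle, is to improve $|H^{uv}|\le 2^q$ to the equality $|H^{uv}|=2^q$ valid with probability~$1$, i.e.\ to show the $2^q$ orbit-distances are pairwise distinct. The model case is a single bad reflection $\tilde{R}^w$ with $\rho(w)\in\{i-q+1,\ldots,i\}$: it fixes $y(\alpha)_u$ but carries $y(\alpha)_v$ to $R^w y(\alpha)_v$, and the two orbit-distances agree only if $y(\alpha)_u$ lies on the perpendicular bisector $H_w$ of these two (distinct, with probability~$1$) positions of $v$ — a hyperplane condition of measure zero, generalizing the non-collinearity argument that closes Proposition~\ref{tworadii}. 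For two orbits separated by a product of several bad reflections, equality of the corresponding distances forces $y(\alpha)_u$ onto the perpendicular bisector of $y(\alpha)_v$ and its image under a fixed composition of reflections, again a proper algebraic subset of $\mathbb{R}^K$. The work is to certify that, over all $\binom{2^q}{2}$ pairs simultaneously, none of these coincidence loci is all of $\mathbb{R}^K$; granting this, their finite union is null and the standing measure-zero hypothesis gives the result.
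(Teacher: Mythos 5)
Your proposal is correct and takes essentially the same route as the paper: the paper's own proof just generalizes Proposition~\ref{tworadii} (phrased as an induction on $q$), i.e.\ it uses Lemma~\ref{fact4} to let the $K$ reflections at levels $i-K-q+1,\ldots,i-q$ --- exactly those whose mirror hyperplane passes through $y_u$ --- act on the $2^{K+q}$ leaves and collapse the root--leaf distances into $2^q$ classes, with distinctness holding because any coincidence forces an affine degeneracy (``collinearity'') of Lebesgue measure zero. Your $\mathbb{F}_2$ orbit count is a direct and more explicit packaging of that same argument, and your perpendicular-bisector locus is the same measure-zero condition the paper invokes.
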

\begin{proof}
The proof of Prop.~\ref{tworadii} can be generalized to span an
arbitrary number of levels by induction on $q$. Two distances
$r_{j_1},r_{j_2}\in H^{uv}$ can only be equal by collinearity of some
subsets of points, an event occurring with probability 0. \qed
\end{proof}

\begin{corollary}
Let $y\in X$ and $v\in V\smallsetminus V_0$ such that $\Upsilon(y,v)$
holds. If $\{u,w\}\in E$ with $u<v<w$ and $\rho(w)-\rho(u)>K$ then
$d_{uw}\in H^{uw}$ with probability 1. \label{2r2}
\end{corollary}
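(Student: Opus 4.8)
The plan is to reduce the statement to Corollary~\ref{2r1}. Put $q=\rho(w)-\rho(u)-K$; since $\{u,w\}\in E$ with $\rho(w)-\rho(u)>K$ we have $q\ge 1$, so the window of $K+q+1$ consecutive levels from $\rho(u)$ to $\rho(w)$ has exactly the shape required by Corollary~\ref{2r1}, and the set $H^{uw}$ of $2^q$ admissible $u$--$w$ distances is the object under discussion. The trivial half comes first: because $y\in X$ is a valid embedding and $\{u,w\}\in E$, the edge constraint is satisfied, so $\|y_u-y_w\|=d_{uw}$. Hence it suffices to show that the realized distance $\|y_u-y_w\|$ is one of the values listed in $H^{uw}$.

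Next I would exhibit $y$, restricted to the levels between $\rho(u)$ and $\rho(w)$, as the embedding carried by a leaf of a full feasible subtree $\mathcal{T}'$ to which Corollary~\ref{2r1} applies. This is where the hypothesis $\Upsilon(y,v)$ with $u<v<w$ enters: it guarantees that the BP tree genuinely branches at level $\rho(v)$ on the path to $y$, i.e.\ the common node on $\mathsf{p}(\beta)$ at level $\rho(v)-1$ has two feasible children. By Proposition~\ref{uniformlevel}, with probability $1$ every feasible node at a given level has the same number of feasible children, so the branching observed at level $\rho(v)$ is level-uniform; propagating it across the window $[\rho(u)+1,\rho(w)]$ together with the reflection structure of Corollary~\ref{reflectionbranch} and Lemma~\ref{fact4} should let me complete the path to $y$ into a full feasible subtree $\mathcal{T}'$ of the $K+q+1$ levels $\rho(u),\ldots,\rho(w)$, rooted at the node of $\mathsf{p}(\beta)$ at level $\rho(u)$, all of whose nodes are marked $\boxplus$.

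Then I would simply apply Corollary~\ref{2r1} to $\mathcal{T}'$: it produces the set $H^{uw}$ of $2^q$ distinct positive reals and asserts that every leaf, in particular the one carrying $y$, has its $u$--$w$ distance in $H^{uw}$. Combining this with the first step gives $d_{uw}=\|y_u-y_w\|\in H^{uw}$, the probability-$1$ qualifier being inherited from Corollary~\ref{2r1} and Proposition~\ref{uniformlevel}.

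The main obstacle is the second step: converting the single branch guaranteed by $\Upsilon(y,v)$ into a genuinely full feasible subtree across the whole window. By itself $\Upsilon(y,v)$ only asserts branching at level $\rho(v)$, so the delicate point is to argue that, with probability $1$, the remaining levels of the window also branch feasibly on the relevant nodes (or, failing that, that $\|y_u-y_w\|$ still lands in the possibly smaller admissible set, which I would then identify with $H^{uw}$). I expect this to hinge precisely on Proposition~\ref{uniformlevel} (level-uniformity of branching) and on the standing genericity assumption that collinearity and other degeneracy events have Lebesgue measure, hence probability, zero, exactly as in the proofs of Proposition~\ref{tworadii} and Corollary~\ref{2r1}.
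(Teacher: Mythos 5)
Your first step is fine: since $y\in X$ is valid and $\{u,w\}\in E$, indeed $\|y_u-y_w\|=d_{uw}$, so the corollary reduces to showing $\|y_u-y_w\|\in H^{uw}$. The gap is in your second step, and it is not the repairable technicality you suggest: a subtree of $\mathcal{T}$ spanning the $K+q+1$ levels $\rho(u),\ldots,\rho(w)$ with \emph{all} nodes marked $\boxplus$ --- which is exactly what the hypothesis of Cor.~\ref{2r1} requires --- generically does not exist, and the culprit is the very edge $\{u,w\}$ you are analyzing. At level $\rho(w)$, BP prunes (marks $\boxminus$, or never generates descendants of) every candidate whose distance to $y_u$ differs from $d_{uw}$; since with probability $1$ the candidate distances take $2^q\ge 2$ distinct values, a positive fraction of the $2^{K+q}$ candidates in the window is necessarily cut, so the window can never be completed to a full feasible subtree. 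Prop.~\ref{uniformlevel} cannot rescue this: it only equalizes the \emph{number} of feasible children across the feasible nodes of one level; it does not prevent pruning, and in particular it does not let you propagate the single branching guaranteed by $\Upsilon(y,v)$ into $2^{K+q}$ feasible leaves. Your fallback (``a possibly smaller admissible set, which I would then identify with $H^{uw}$'') is precisely the assertion to be proven, and no argument is offered for it.

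The paper's proof takes a different route that avoids this trap. It treats $H^{uw}$ as the set of distances produced by the reflection structure of the candidate points (i.e.\ the conclusion of Cor.~\ref{2r1} read as a geometric fact about sphere-intersection candidates, not as a statement requiring feasibility marks in $\mathcal{T}$), and it exploits the hypothesis $u<v<w$ in a way your proposal never does: $\Upsilon(y,v)$ yields \emph{two distinct} feasible nodes at level $\rho(w)$, one on the path to $y$ and one on the other branch of the $v$-branching, and both lie on the \emph{same} sphere of radius $d_{uw}$ centered at $y_u$, because the two paths coincide up to level $\rho(v)-1\ge\rho(u)$ and hence share the position of $u$. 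A completely determined sphere can contain two of the candidate points only with probability $0$ unless its radius is one of the $2^q$ values in $H^{uw}$; since the instance is YES and BP does not prune both branches, the only possibility occurring with positive probability is $d_{uw}\in H^{uw}$. That two-surviving-nodes-on-one-sphere argument is the idea missing from your proposal, and it is what replaces the (impossible) construction of a fully feasible window.
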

\begin{proof}
Since $\Upsilon(y,v)$ holds, then the GDMDGP instance is YES and there
must exist at least two feasible nodes at level $\rho(w)$ in
$\mathcal{T}$. If $d_{uw}\not\in H^{uw}$ the probability that a
completely determined sphere contains two arbitrary points in
$\mathbb{R}^K$ is zero. Since the instance is a YES one, however, the
BP algorithm does not prune all feasible nodes due to $d_{uw}$. By
Cor.~\ref{2r1} the only remaining possibility (which therefore
occurs with probability 1) is that $d_{uw}\in H^{uw}$. \qed
\end{proof}

\begin{corollary}
Let $y\in X$ and $v\in V\smallsetminus V_0$ such that $\Upsilon(y,v)$
holds. If $u\in V$ with $u>v$ then $R^v y_u$ belongs to a valid
extension of $y[U_v]$. \label{2r3}
\end{corollary}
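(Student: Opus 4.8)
The plan is to exhibit a single valid embedding that simultaneously witnesses the claim for every $u>v$, namely $\tilde{R}^v y$. Since $U_v$ consists of the vertices of rank $\rho(v)-K,\ldots,\rho(v)-1$, all of which precede $v$, the embedding $\tilde{R}^v y$ agrees with $y$ on $U_v$; and because $\rho(u)>\rho(v)$, its $u$-th component is exactly $R^v y_u$. Thus it suffices to prove that $\tilde{R}^v y$ is a valid embedding of $G$, after which the corollary follows by taking this embedding as the required valid extension of $y[U_v]$.

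First I would verify validity on all edges that are not \emph{crossing}, i.e.~all $\{a,b\}\in E$ (say $\rho(a)<\rho(b)$) with $\rho(b)<\rho(v)$ or $\rho(a)\ge\rho(v)$. On such edges $\tilde{R}^v y$ either coincides with $y$ (both endpoints fixed) or reflects both endpoints by the isometry $R^v$, so the norm is unchanged and the edge constraint is inherited from $y$. The same holds for a crossing edge whose lower endpoint $a$ satisfies $\rho(a)\ge\rho(v)-K$: then $a\in U_v$, so $y_a$ lies on the reflection hyperplane $H$ and is fixed by $R^v$, whence $\|y_a-R^v y_b\|=\|R^v y_a-R^v y_b\|=\|y_a-y_b\|=d_{ab}$. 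Note also that, since $\Upsilon(y,v)$ forces $|N(v)\cap\gamma(v)|=K$ with probability $1$ (Prop.~\ref{uniformlevel} and Lemma~\ref{atmostonelemma}), $v$ has no adjacent predecessor of rank below $\rho(v)-K$, so every edge not yet accounted for has its upper endpoint strictly above $v$.

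The only edges left to handle are therefore the \emph{long crossing} edges $\{a,b\}$ with $a<v<b$ and $\rho(b)-\rho(a)>K$, which are exactly the configuration of Cor.~\ref{2r2}: with probability $1$ we have $d_{ab}\in H^{ab}$. To conclude I would argue that $R^v y_b$ is the position taken by $b$ on the feasible sibling branch produced by $\Upsilon(y,v)$. Reflecting through $H$ maps the leaf carrying $y$ to a feasible leaf along the mirrored path (using Lemma~\ref{fact4} to propagate the reflection past level $\rho(v)$), at which $a$ keeps its position $y_a$ while $b$ takes a position whose distance to $y_a$ lies in $H^{ab}$; feasibility of that leaf makes the edge $\{a,b\}$ tight, giving $\|y_a-R^v y_b\|=d_{ab}$ and completing the verification. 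The main obstacle is precisely this last step: the long crossing edges are the only ones a reflection can break, and showing that $R^v y_b$ still lands on the sphere centered at $y_a$ with radius $d_{ab}$ relies on the probability-one ``two-radii'' machinery (Cor.~\ref{2r1}--\ref{2r2}) together with the feasibility of the mirrored branch guaranteed by $\Upsilon(y,v)$. The short crossing edges, by contrast, are free, because their lower endpoint necessarily belongs to $U_v$ and is thus fixed by $R^v$.
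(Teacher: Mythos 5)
Your reduction of the corollary to the validity of $\tilde{R}^v y$ inverts the paper's logical order, and the inversion is where the argument breaks. The paper proves this corollary \emph{first}, by a purely local argument about the single vertex $u$, and only afterwards proves that $\tilde{R}^v y\in X$ (Thm.~\ref{fact5}); indeed, the paper's proof of Thm.~\ref{fact5} \emph{uses} Cor.~\ref{2r3} in order to select, level by level, a feasible sibling path with complementary $\lambda$-values. So if you want to argue in the opposite direction you must prove Thm.~\ref{fact5} without Cor.~\ref{2r3}, and your treatment of the long crossing edges does not do this: the assertion that ``reflecting through $H$ maps the leaf carrying $y$ to a feasible leaf along the mirrored path'' is exactly the statement to be proven. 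Lemma~\ref{fact4} cannot supply it, because that lemma is stated for two nodes that are \emph{already known to be feasible}, and it only converts ``positions are reflections'' into ``$\lambda$'s are complementary'' and back; it says nothing about whether the mirrored positions survive pruning by a long edge $\{a,b\}$. Likewise, $\Upsilon(y,v)$ only guarantees that \emph{some} feasible leaf lies beyond the sibling node at level $\rho(v)$; it does not guarantee that the fully mirrored path is feasible---at every later level with two feasible subnodes, the surviving sibling path could take the branch whose $\lambda$ equals that of $y$'s path, so its position for $b$ need not be $R^v y_b$. Your closing step (``feasibility of that leaf makes the edge $\{a,b\}$ tight'') therefore assumes the conclusion.

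The missing idea is the paper's local one: work edge by edge at the vertex $u$ itself. If there is no edge $\{w,u\}\in E$ with $\rho(u)-\rho(w)>K$, Cor.~\ref{reflectionbranch} already gives the claim. If such an edge exists, Cor.~\ref{2r2} gives $d_{wu}\in H^{wu}$ with probability $1$, and then---this is the step for which you have no substitute---the permutation/orbit analysis in the proof of Prop.~\ref{tworadii} shows that the candidate positions of $u$ realizing the distance $d_{wu}$ from $y_w$ come in pairs exchanged by $R^v$; hence $R^v y_u$ also realizes $d_{wu}$ and belongs to a valid extension of $y[U_v]$. That pairing argument is the actual content of the corollary, and it is what later licenses the ``feasibility of the mirrored branch'' that you invoked as if it were given. (Your handling of the remaining edges---both endpoints on the same side of $v$, or lower endpoint in $U_v$ and hence fixed by $R^v$, including your correct observation that $\Upsilon(y,v)$ forces $|N(v)\cap\gamma(v)|=K$ with probability $1$---is sound, but it coincides with the easy cases $E_1,E_2$ and the short part of $E_3$ in the paper's proof of Thm.~\ref{fact5}, not with the substance of Cor.~\ref{2r3}.)
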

\begin{proof}
If there is no edge $\{w,u\}\in E$ with $\rho(u)-\rho(w)>K$ the result
follows by Cor.~\ref{reflectionbranch}. Otherwise, by Cor.~\ref{2r2},
$d_{wu}\in H^{wu}$. As in the proof of Prop.~\ref{tworadii}, all pairs of
points that are feasible w.r.t.~$d_{wu}$ are reflections of each other
w.r.t.~$R^v$. \qed
\end{proof}

\begin{theorem}
Let $y\in X$ and $v\in V\smallsetminus V_0$ such that $\Upsilon(y,v)$
holds. Then $\tilde{R}^vy\in X$ with probability 1. \label{fact5}
\end{theorem}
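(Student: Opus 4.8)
The plan is to show that $\tilde{R}^vy$ is a valid embedding of $G$ extending $\bar{x}$, and then to invoke Lemma~\ref{fact1}, which guarantees that every valid embedding extending $\bar{x}$ is collected in $X$. Since $v\notin V_0$ we have $\rho(v)=i>K$, so the first $K$ components of $\tilde{R}^vy$ coincide with those of $y$, hence with $\bar{x}$; thus $\tilde{R}^vy$ does extend $\bar{x}$, and it remains only to verify that every edge constraint is satisfied with probability~$1$. I will write $H$ for the hyperplane through $y[U_v]$, so that $R^v$ is the reflection across $H$ and $\tilde{R}^vy$ leaves the vertices of rank $<i$ fixed while reflecting those of rank $\ge i$.

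I would classify each edge $\{a,b\}\in E$ (say $\rho(a)<\rho(b)$) into three types. If $\rho(b)<i$ both endpoints are untouched and the constraint holds because $y$ is valid; if $\rho(a)\ge i$ both endpoints are moved by the same isometry $R^v$, so $\|R^vy_a-R^vy_b\|=\|y_a-y_b\|=d_{ab}$. The only edges needing work are the \emph{crossing} edges with $\rho(a)<i\le\rho(b)$, for which $(\tilde{R}^vy)_a=y_a$ but $(\tilde{R}^vy)_b=R^vy_b$.

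For the crossing edges I would split according to $\rho(b)-\rho(a)$. First, $\Upsilon(y,v)$ forces a branching at level $i$, so by Prop.~\ref{uniformlevel} together with Lemmata~\ref{atmosttwolemma}--\ref{atmostonelemma} we have $|N(v)\cap\gamma(v)|=K$ with probability~$1$; hence $U_v$ is exactly the set of $K$ immediate predecessors of $v$ (ranks $i-K,\ldots,i-1$), and no vertex of rank below $i-K$ is adjacent to $v$. Now for a short crossing edge ($\rho(b)-\rho(a)\le K$) the lower endpoint satisfies $\rho(a)\ge\rho(b)-K\ge i-K$ together with $\rho(a)<i$, so $a\in U_v$; thus $y_a\in H$, $R^vy_a=y_a$, and $\|y_a-R^vy_b\|=\|R^vy_a-R^vy_b\|=d_{ab}$. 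This disposes of all discretization edges, in particular of every edge incident to $v$ itself.

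The main obstacle is the long crossing edges, i.e.~$\{a,b\}$ with $\rho(a)<i<\rho(b)$ and $\rho(b)-\rho(a)>K$, because here $a\notin U_v$ and generically $y_a\notin H$, so the naive isometry argument fails and one genuinely needs the global structure built up earlier. This is exactly the situation handled by Cor.~\ref{2r2}: since $\Upsilon(y,v)$ makes the instance a YES one whose reflected branch survives, the pruning distance cannot eliminate that branch, which forces $d_{ab}\in H^{ab}$ with probability~$1$. As in the proof of Cor.~\ref{2r3}, membership in $H^{ab}$ means that the placements of $b$ feasible with respect to $d_{ab}$ come in pairs exchanged by $R^v$; since $y_b$ is one such feasible placement ($y$ being valid), so is $R^vy_b$, giving $\|y_a-R^vy_b\|=d_{ab}$. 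Equivalently, one may simply apply Cor.~\ref{2r3} to each $u>v$ to conclude that $R^vy_u$ is a valid placement. With all three edge types verified, $\tilde{R}^vy$ is a valid embedding extending $\bar{x}$, and therefore $\tilde{R}^vy\in X$ with probability~$1$ by Lemma~\ref{fact1}.
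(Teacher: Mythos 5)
Your overall skeleton matches the paper's: the same partition of $E$ into edges with both endpoints before $v$, both at or after $v$, and crossing edges, with the first two classes handled by the identity and by the fact that $R^v$ is an isometry. Two of your additions are genuinely good: the observation that $\Upsilon(y,v)$ forces $|N(v)\cap\gamma(v)|=K$ with probability $1$ (via Lemma~\ref{atmostonelemma}), which rules out long edges incident to $v$ itself, and the treatment of short crossing edges, where $\rho(b)-\rho(a)\le K$ puts $a$ in $U_v$, so $y_a$ lies on the mirror hyperplane, $R^vy_a=y_a$, and the distance is preserved. That argument is correct, elementary, and cleaner than anything the paper says for those edges (the paper does not isolate this case at all).

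However, there is a genuine gap in your long-edge case, and it sits exactly at the crux of the theorem. For an edge $\{a,b\}$ with $\rho(a)<\rho(v)-K$ you need $\|y_a-R^vy_b\|=d_{ab}$, where $y_a$ is the \emph{original} position of $a$. Cor.~\ref{2r2} indeed gives $d_{ab}\in H^{ab}$, but the next step --- ``placements of $b$ feasible with respect to $d_{ab}$ come in pairs exchanged by $R^v$'' --- is precisely the statement that has to be proved, and neither Cor.~\ref{2r2} nor the \emph{statement} of Cor.~\ref{2r3} delivers it. Cor.~\ref{2r3} only asserts that $R^vy_b$ belongs to \emph{a} valid extension of $y[U_v]$; since $a\notin U_v$ in the long-edge case, nothing in that statement ties the position of $a$ in such an extension to $y_a$, so the equality $\|y_a-R^vy_b\|=d_{ab}$ does not follow. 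Nor is the $R^v$-invariance of the sphere of radius $d_{ab}$ around $y_a$ automatic: $R^v$ does not fix $y_a$, so reflecting a surviving candidate generically \emph{changes} its distance to $y_a$; that the survivor in the second branch is exactly $R^vy_b$ (rather than some other candidate) is the whole content of the theorem. The paper's proof supplies exactly this missing identification: it uses $\Upsilon(y,v)$ to produce a second feasible leaf $\alpha'$ whose path shares \emph{all} ancestors with $y$'s path up to level $\rho(v)-1$, so that $x(\omega')_a=y_a$ (this is what anchors the distance to the original $y_a$), and then shows $x(\omega')_b=R^vy_b$ by propagating $\lambda$-complementarity down the two paths, using Prop.~\ref{uniformlevel} to match the number of feasible children at every level, Cor.~\ref{2r3} at the branching levels, and Lemma~\ref{fact4} to convert $\lambda$-complementarity into reflection; feasibility of $\omega'$ then yields $\|y_a-R^vy_b\|=d_{ab}$. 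Your proof never constructs this mirror path or makes this identification --- the ``pairing'' sentence you import from the proof sketch of Cor.~\ref{2r3} asserts the needed fact rather than deriving it --- so to close the argument you would essentially have to reproduce the paper's $\omega,\omega'$ construction and its use of Lemma~\ref{fact4}.
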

\begin{proof}
We have to show that $\tilde{R}^vy$ is a valid embedding for
$G=(V,E)$. Partition $E$ into three subsets $E_1,E_2,E_3$, where
$E_1=\{\{t,u\}\in E\;|\;t,u<v\}$, $E_2=\{\{t,u\}\in E\;|\;t,u\ge v\}$
and $E_3=\{\{t,u\}\in E\;|\;t<v\land u\ge v\}$. For $E_1$, by
definition
$\|(\tilde{R}^vy)_t-(\tilde{R}^vy)_u)\|=\|Iy_t-Iy_u\|=\|y_t-y_u\|=d_{tu}$
as claimed. For $E_2$,
$\|(\tilde{R}^vy)_t-(\tilde{R}^vy)_u)\|=\|R^vy_t-R^vy_u\|=\|y_t-y_u\|=d_{tu}$
because $R^v$ is an isometry. For $E_3$, we aim to show that
$\|Iy_t-R^vy_u\|=d_{tu}$. Since $y\in X$, by Lemma \ref{fact1} there
is a feasible leaf node $\alpha$ with $x(\alpha)=y$. Because
$\Upsilon(y,v)$, $\exists\eta\in\mathcal{V}_{\rho(v)-1}$ such that
$x(\eta)=y[\gamma(v)]$ and $N^+(\eta)=\{\beta,\beta'\}$ with
$\mu(\beta)=\mu(\beta')=\boxplus$; we can assume without loss of
generality that $\mbox{\sf
  p}(\alpha)\cap\mathcal{V}_{\rho(v)}=\{\beta\}$; furthermore, again
by $\Upsilon(y,v)$, there is at least one feasible leaf node $\alpha'$
such that $\mbox{\sf
  p}(\alpha')\cap\mathcal{V}_{\rho(v)}=\{\beta'\}$. Let
$\{\omega\}=\mbox{\sf p}(\alpha)\cap\mathcal{V}_{\rho(u)}$ and
$\{\omega'\}=\mbox{\sf p}(\alpha')\cap\mathcal{V}_{\rho(u)}$.  Because
$\omega'$ is feasible, $\|x(\omega')_t-x(\omega')_u\|=d_{tu}$; because
$\eta$ is an ancestor of both $\alpha$ and $\alpha'$ at level
$\rho(v)-1$ and $t<v$, $\mbox{\sf
  p}(\alpha')\cap\mathcal{V}_{\rho(t)}=\mbox{\sf
  p}(\alpha)\cap\mathcal{V}_{\rho(t)}$, which implies that
$x(\omega')_t=x(\omega)_t=y_t$. Thus,
$\|y_t-y_u\|=d_{tu}=\|y_t-x(\omega')_u\|$. Furthermore, because
$\beta'\in\mbox{\sf p}(\omega')\cap\mathcal{V}_{\rho(v)}$,
$x(\omega')$ extends $x(\beta')$. By Alg.~\ref{alg1},
Steps~\ref{chiinj1a} and~\ref{chiinj1b}, $\lambda(\beta)=1-\lambda(\beta')$. Because
$\alpha$ is feasible, at every level $\rho(u')\in V$ such that $v\le
u'<u$ the node $\theta\in\mbox{\sf
  p}(\alpha)\cap\mathcal{V}_{\rho(u')}$ has $f\in\{1,2\}$ feasible
subnodes; by Prop.~\ref{uniformlevel}, the node $\theta'\in\mbox{\sf
  p}(\alpha')\cap\mathcal{V}_{\rho(u')}$ also has $f$ feasible
subnodes. If $f=2$, by Cor.~\ref{2r3} it is possible to choose
$\alpha'$ so that $\lambda(\theta')=1-\lambda(\theta)$ with
probability 1; if $f=1$ then by Alg.~\ref{alg1}, Steps~\ref{chiinj2a} and~\ref{chiinj2b},
all feasible nodes inherit the same $\lambda$ value as their parents,
so $\lambda(\theta')=1-\lambda(\theta)$. By Lemma \ref{fact4},
$x(\omega')_u=R^vy_u$ with probability 1. Hence
$\|y_t-R^vy_u\|=d_{tu}$ as claimed.  \qed
\end{proof}

\section{Symmetry and Number of Solutions}\label{s:nbsol}

Our strategy for proving that feasible GDMDGP instances have power of
two solutions with probability 1 is as follows. We map embeddings
$y\in X$ to binary sequences $\chi\in\{0,1\}^n$ describing the
``branching path'' in the tree $\mathcal{T}$. We define a symmetry operation on
$\chi$ by flipping its tail from a given component $i$ to its end
(this operation is akin to branching at level $i$). We show that the cardinality of the group of all such symmetries is a power of two by
bijection with a set of binary sequences. Finally we prove that the
cardinality of the symmetry group is the same as $|X|$.

For all leaf nodes $\alpha\in\mathcal{V}$ with $\mu(\alpha)=\boxplus$
let $\chi(\alpha)=(\lambda(\beta)\;|\;\beta\in\mbox{\sf p}(\alpha))$;
since embeddings in $X$ are also in correspondence with leaf
$\boxplus$-nodes of $\mathcal{T}$ by Alg.~\ref{alg1}, Step
\ref{valid1}, $\chi$ defines a relation on $X\times \{0,1\}^n$.
\begin{lemma}
With probability 1, the relation $\chi$ is a function.
\label{chiwd}
\end{lemma}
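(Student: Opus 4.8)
The plan is to regard $\chi$ as a relation $\chi\subseteq X\times\{0,1\}^n$ and to check the two defining properties of a function separately. Totality is immediate: by Step~\ref{valid1} every $y\in X$ is of the form $x(\alpha)$ for some leaf node $\alpha$ with $\mu(\alpha)=\boxplus$, and that node contributes the pair $(x(\alpha),\chi(\alpha))$, so every $y\in X$ is related to at least one sequence. The content of the lemma is therefore single-valuedness. Here I would first record a structural (probability-free) fact: since $\mathcal{T}$ is a binary tree rooted at {\sf r} and the two subnodes created at any branching receive complementary labels (Steps~\ref{chiinj1a}--\ref{chiinj1b}), the label sequence $\chi(\alpha)=(\lambda(\beta)\mid\beta\in\mbox{\sf p}(\alpha))$ encodes the left/right choices along $\mbox{\sf p}(\alpha)$ and hence determines $\alpha$ uniquely; thus $\alpha\neq\alpha'$ forces $\chi(\alpha)\neq\chi(\alpha')$. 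Consequently $\chi$ fails to be single-valued exactly when there exist two \emph{distinct} leaf $\boxplus$-nodes $\alpha\neq\alpha'$ with $x(\alpha)=x(\alpha')$, and the whole lemma reduces to ruling this out with probability $1$.

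To do so I would argue that distinct leaf $\boxplus$-nodes carry distinct embeddings. First note that every node on the path to a feasible level-$n$ leaf is itself feasible, because the recursion in Alg.~\ref{alg1} only invokes {\sc BranchAndPrune} on subnodes $\theta$ with $\mu(\theta)=\boxplus$; a $\boxplus$-leaf at level $n$ can thus only be reached through a chain of $\boxplus$-nodes. Given distinct such leaves $\alpha,\alpha'$, let $i$ be the least level at which $\mbox{\sf p}(\alpha)$ and $\mbox{\sf p}(\alpha')$ differ. Then they share a common node $\beta\in\mathcal{V}_{i-1}$ and continue to two distinct children of $\beta$, both feasible. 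Since $\beta$ has at most two children, these children are precisely the two extensions $z',z''$ produced at Step~\ref{valid2} and assigned in Steps~\ref{chiinj1a}--\ref{chiinj1b}, so with $v=\rho^{-1}(i)$ the placements $x(\alpha)_v$ and $x(\alpha')_v$ equal $z'$ and $z''$ in some order.

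By Lemma~\ref{atmosttwolemma} the two extensions $z',z''$ are distinct with probability $1$, the only exception being the measure-zero event that the discriminant of \eqref{quadraticzK} vanishes at that branching. Hence $x(\alpha)_v\neq x(\alpha')_v$ with probability $1$, so the two embeddings already differ at vertex $v$ and cannot coincide. As $\mathcal{T}$ is finite (at most $2^n$ nodes), the collection of bad branchings is finite and the offending event is a finite union of measure-zero events, hence still has probability $0$; on its complement distinct leaf $\boxplus$-nodes give distinct embeddings, so $\chi$ is single-valued and therefore a function. I expect the only delicate point to be the bookkeeping that certifies "the two diverging children of $\beta$ are exactly the feasible pair $z',z''$" rather than a feasible node paired with a pruned one; this is precisely what the preliminary observation that both paths consist of $\boxplus$-nodes is designed to supply, so no further argument beyond Lemma~\ref{atmosttwolemma} should be needed.
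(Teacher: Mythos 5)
Your proposal is correct and follows essentially the same route as the paper's proof: both reduce the failure of single-valuedness to the existence of two distinct leaf $\boxplus$-nodes carrying the same embedding, which forces the discriminant of the quadratic equation \eqref{quadraticzK} from Lemma~\ref{atmosttwolemma} to vanish at some branching of rank $>K$, an event of probability zero (and a finite union of such events still has probability zero). The paper states this in a single sentence; you have merely filled in the bookkeeping (totality, labels determining paths, ancestors of feasible leaves being feasible), all of which is sound.
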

\begin{proof}
For $\chi$ to fail to be well-defined, there must exist an embedding
$x$ which is in relation with two distinct binary sequences
$\chi',\chi''$, which corresponds to the discriminant of the quadratic
equation in the proof of Lemma \ref{atmosttwolemma} taking value zero
at some rank $>K$, which happens with probability 0. \qed
\end{proof}
Let $\Xi=\{\chi(y)\;|\;y\in X\}$. For $y\in X$ let $y^i$ be its
subsequence $(x_1,\ldots,x_i)$. We extend $\chi$ to be defined on all
such subsequences by simply setting
$\chi^i=(\chi(y)_1,\ldots,\chi(y)_i)$; $\chi(y)$ is valid if $y$ is a
valid embedding.

Let $N=\{1,\ldots,n\}$ and $g$ be the $n\times n$ binary matrix such
that $g_{ij} = 1$ if $i \le j$ and $0$ otherwise (the upper triangular
$n\times n$ all-1 matrix); let $g_i$ be its $i$-th row vector and
$\Gamma=\{g_i\;|\; i\in N\}$. Consider the elementwise modulo-2
addition in the set $\mathbb{F}_2^n$ (denoted $\oplus$): this endows
$\mathbb{F}_2^n$ with an additive group structure with identity
$e=(0,\ldots,0)$ where each element is idempotent. Thus,
$\mathcal{G}=(\mathbb{F}_2^n,\oplus)\cong C_2^n$. This group naturally
acts on itself (and subsets thereof) using the same $\oplus$
operation.  It is not difficult to prove that $\Gamma$ is a set of
group generators for $\mathcal{G}$ and a linearly independent set of
the vector space $\mathcal{V}$ given by $\mathcal{G}$ with scalar
multiplication over $\mathbb{F}_2$. For all $S\subseteq N$, let
\begin{equation*}
  g_S=\bigoplus\limits_{i\in S}g_i,
\end{equation*}
and define a mapping $\phi:\mathcal{P}(N)\to\mathcal{G}$ given by
$\phi(S)=g_S$.
\begin{lemma}
  $\phi$ is injective.
  \label{injthm}
\end{lemma}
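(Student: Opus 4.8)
The plan is to exhibit an explicit left inverse for $\phi$, which establishes injectivity directly (and in fact shows $\phi$ is a bijection, since $|\mathcal{P}(N)|=|\mathcal{G}|=2^n$). First I would compute the coordinates of $g_S$. Since $(g_i)_j=1$ precisely when $i\le j$, the $j$-th coordinate of $g_S=\bigoplus_{i\in S}g_i$ is $(g_S)_j=\bigoplus_{i\in S,\,i\le j}1=|S\cap\{1,\ldots,j\}|\bmod 2$; that is, each coordinate of $g_S$ records the parity of the number of elements of $S$ not exceeding $j$.

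The key observation is that these parities of partial sums can be differenced to recover membership in $S$. Adopting the convention $(g_S)_0=0$ for the empty partial sum, for every $j\in N$ we have $(g_S)_j\oplus(g_S)_{j-1}=(|S\cap\{1,\ldots,j\}|-|S\cap\{1,\ldots,j-1\}|)\bmod 2$, which equals $1$ exactly when $j\in S$. Hence the map sending a vector $w\in\mathcal{G}$ to the set $\{j\in N : w_j\oplus w_{j-1}=1\}$ (with $w_0:=0$) is a two-sided inverse of $\phi$. I would then conclude that $\phi$ is injective.

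If one prefers to avoid the explicit computation, the statement follows immediately from the linear independence of $\Gamma$ asserted just above: if $\phi(S)=\phi(T)$ then $g_{S\triangle T}=\bigoplus_{i\in S\triangle T}g_i=e$, and linear independence of the $g_i$ over $\mathbb{F}_2$ forces $S\triangle T=\emptyset$, i.e.\ $S=T$. There is no genuine obstacle here; the only point requiring care is the $\mathbb{F}_2$ (parity) bookkeeping and the boundary convention at $j=1$, after which injectivity is immediate.
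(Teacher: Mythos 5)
Your proof is correct, and your primary argument takes a genuinely different route from the paper's. The paper proves injectivity exactly as in your fallback paragraph: from $g_S=g_T$ it deduces $\bigoplus_{i\in S\triangle T}g_i=e$ (using that every element of $\mathcal{G}$ is self-inverse) and then invokes the linear independence of $\Gamma$ --- which the paper merely asserts, without proof, just before defining $\phi$ --- to force $S\triangle T=\emptyset$. Your main argument instead builds an explicit inverse: coordinate $j$ of $g_S$ is the parity of $|S\cap\{1,\ldots,j\}|$, and differencing consecutive coordinates (with the convention $w_0:=0$) recovers the indicator of $S$. This buys two things. First, it is self-contained: it does not rest on the unproven linear-independence assertion, and in fact it proves that assertion as a byproduct, since injectivity of $\phi$ says precisely that no nonempty $S\subseteq N$ satisfies $\bigoplus_{i\in S}g_i=e$. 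Second, it shows directly that $\phi$ is a bijection from $\mathcal{P}(N)$ onto $\mathcal{G}$ (either by your cardinality remark $|\mathcal{P}(N)|=|\mathcal{G}|=2^n$, or by the telescoping identity $\bigoplus_{k\le j}(w_k\oplus w_{k-1})=w_j$, which verifies the right-inverse property directly); this surjectivity is exactly what the paper must separately argue when it restricts $\phi$ to $\mathcal{P}(H)$ in Lemma~\ref{cardcor}. In short, the paper's proof is shorter on the page but defers the real content to an unproved claim; yours does the parity bookkeeping once and gets both injectivity and surjectivity.
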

\begin{proof}
We show that for all $S,T\subseteq N$, if $g_S=g_T$ then $S=T$.
\begin{equation*}
\begin{array}{llrcl} 
  &&g_S &=& g_T \\ [0.3em]
  &\Rightarrow & \bigoplus\limits_{i\in S}g_i &=&
    \bigoplus\limits_{i\in T} g_i \\ [0.3em]
  &\Rightarrow & \bigoplus\limits_{i\in S} g_i \oplus
    \bigoplus\limits_{i\in T} g_i^{-1} &=& e\\ [0.3em]
  \mbox{idempotency} &\Rightarrow & \bigoplus\limits_{i\in S} g_i \oplus
    \bigoplus\limits_{i\in T} g_i &=& e\\ [0.3em]
  \mbox{$g_i\oplus g_i=g_i^2$}  &\Rightarrow &
    \bigoplus\limits_{i\in S\triangle T} g_i \oplus
    \bigoplus\limits_{i\in S\cap T} g_i^2 &=& e \\ [0.3em] 
  \mbox{idempotency}& \Rightarrow &\bigoplus\limits_{i\in S\triangle
    T} g_i &=&e\\ [0.3em]
  \mbox{linear independence} &\Rightarrow &S\triangle T &=& \emptyset
    \\ [0.3em]
  &\Rightarrow & S &=& T.
\end{array}
\end{equation*}
This concludes the proof. \qed
\end{proof}

\begin{lemma}
 For all $H\subseteq\Gamma$, $|\langle H\rangle|=2^{|H|}$.
 \label{cardcor}
\end{lemma}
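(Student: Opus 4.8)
The plan is to identify the subgroup $\langle H\rangle$ with a power set and then read off its cardinality from the injectivity already established in Lemma~\ref{injthm}. First I would write $H=\{g_i\mid i\in I\}$ for a suitable index set $I\subseteq N$; this is legitimate because the rows $g_i$ are pairwise distinct (row $i$ begins with exactly $i-1$ zeros), so $i\mapsto g_i$ is a bijection from $N$ onto $\Gamma$ and hence $|H|=|I|$.

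The key observation is that in $\mathcal{G}=(\mathbb{F}_2^n,\oplus)$ every element is idempotent and the group is abelian, so the subgroup generated by $H$ contains nothing beyond subset sums: any product of generators collapses, after cancelling repeated factors via $g_i\oplus g_i=e$ and reordering, to $g_S=\bigoplus_{i\in S}g_i$ for some $S\subseteq I$, and conversely every such $g_S$ lies in $\langle H\rangle$. Therefore
\[
  \langle H\rangle=\{g_S\mid S\subseteq I\}=\phi(\mathcal{P}(I)).
\]
Equivalently, since the only scalars of the $\mathbb{F}_2$-vector space $\mathcal{V}$ are $0$ and $1$, the subgroup generated by $H$ coincides with its $\mathbb{F}_2$-linear span.

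To count these elements I would restrict $\phi$ to $\mathcal{P}(I)$. By Lemma~\ref{injthm}, $\phi$ is injective on all of $\mathcal{P}(N)$, so in particular $S\mapsto g_S$ is injective on the subsets of $I$, and distinct subsets of $I$ yield distinct group elements. Consequently $|\langle H\rangle|=|\mathcal{P}(I)|=2^{|I|}=2^{|H|}$, as claimed.

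There is no substantial obstacle here: the statement is essentially the fact that a linearly independent set of size $k$ in an $\mathbb{F}_2$-vector space spans a subspace of cardinality $2^k$. The only point requiring a moment of care is the reduction of $\langle H\rangle$ to subset sums, which is exactly where idempotency and commutativity of $\oplus$ are used; once that is in place, the injectivity from Lemma~\ref{injthm} (equivalently, the linear independence of $\Gamma$) does the counting for free.
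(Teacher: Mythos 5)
Your proof is correct and follows essentially the same route as the paper's: both identify $\langle H\rangle$ with the image of the power set under $\phi$ (injectivity from Lemma~\ref{injthm}, surjectivity because idempotency and commutativity collapse every word in the generators to a subset sum), giving $|\langle H\rangle|=2^{|H|}$. Your version is in fact slightly more careful than the paper's, since you explicitly distinguish the index set $I$ from the set $H$ of group elements and justify $|H|=|I|$ via the pairwise distinctness of the rows $g_i$.
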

\begin{proof}
The restriction of function $\phi$ to $\mathcal{P}(H)$ is injective by
Lemma~\ref{injthm}. Furthermore, each element $g$ of $\langle H\rangle$
can be written as $\bigoplus\limits_{i\in S} g_i$ for some $S\subseteq H$
because $H$ is a spanning set for the vector space $H$ over
$\mathbb{F}^n_2$, which is setwise equal to the group $\langle
H\rangle$. Thus $\phi$ is surjective too. Hence $\phi$ is a bijection
between $\mathcal{P}(H)$ and $\langle H\rangle$, which yields the
result. \qed
\end{proof}

Let $I$ be the set of levels of $\mathcal{T}$ for which from all nodes
with two valid children there is a path going to a feasible leaf
through both children. Let $L=\{g_i\in\Gamma\;|\;i\in I\}$ and
$\Lambda=\langle L\rangle$ be the subgroup of $\mathcal{G}$ generated
by $L$.

\begin{theorem}
If $\Xi\not=\emptyset$, for all $\xi\in \Xi$ we have $\xi\oplus\Lambda=\Xi$
with probability 1.
\label{mainthm}
\end{theorem}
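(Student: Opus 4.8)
The plan is to exhibit $\Xi$ as a single coset (an orbit) of $\Lambda$ acting by $\oplus$, by setting up a dictionary between the combinatorial tail-flip $\xi\mapsto\xi\oplus g_i$ and the geometric reflection $\tilde R^v$ of Thm~\ref{fact5}. The point is that $g_i$ is precisely the vector whose $\oplus$-action flips every coordinate of a branching sequence from position $i$ to the end while fixing positions $1,\dots,i-1$, whereas $\tilde R^v$ (with $v=\rho^{-1}(i)$) fixes the first $i-1$ embedded points and reflects the rest through $R^v$. I would prove the claimed equality by the two inclusions $\xi\oplus\Lambda\subseteq\Xi$ and $\Xi\subseteq\xi\oplus\Lambda$.

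For the bridge, fix $y\in X$, $i\in I$ and $v=\rho^{-1}(i)$. First I would check that $\Upsilon(y,v)$ holds with probability $1$: by definition $I$ collects the levels carrying a branching node from which both children reach feasible leaves, and by Prop~\ref{uniformlevel} this branching is uniform along level $i$, so the ancestor at level $i-1$ of the leaf representing $y$ also has two feasible children, both leading to feasible leaves. Thm~\ref{fact5} then gives $\tilde R^v y\in X$. The heart of the matter is the identity $\chi(\tilde R^v y)=\chi(y)\oplus g_i$: the leaf of $\tilde R^v y$ shares the path of $y$ through $\mathcal{V}_{i-1}$ and splits off to the sibling child at level $i$ (its position $R^v y_v$ at $v$ differs from $y_v$ with probability $1$), so Lemma~\ref{fact4}, applied along these two paths, forces $\lambda(\eta'')=1-\lambda(\beta'')$ at every level $\ge i$ and equality below, which is exactly the tail-flip encoded by $g_i$. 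Hence every generator $g_i$ ($i\in I$) maps $\Xi$ into $\Xi$; since these maps commute and generate $\Lambda$, we obtain $\xi\oplus\Lambda\subseteq\Xi$.

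For the reverse inclusion I would argue by induction on the first level $m$ at which $\xi=\chi(y)$ and a given $\xi'=\chi(y')\in\Xi$ disagree, aiming at $\xi\oplus\xi'\in\Lambda$ (equivalently $\xi'\in\xi\oplus\Lambda$, by idempotency). If $\xi=\xi'$ then $\xi\oplus\xi'=e\in\Lambda$. Otherwise $y$ and $y'$ follow a common path through $\mathcal{V}_{m-1}$ and split at level $m$; that split node has two feasible children both reaching feasible leaves (namely $y$ and $y'$), so $\Upsilon(y,\rho^{-1}(m))$ holds and $m\in I$. Replacing $y$ by $\tilde R^{\rho^{-1}(m)}y\in X$, whose sequence is $\xi\oplus g_m$ by the bridge, yields an element of $\Xi$ that now agrees with $\xi'$ through level $m$; the induction hypothesis gives $(\xi\oplus g_m)\oplus\xi'\in\Lambda$, whence $\xi\oplus\xi'=g_m\oplus\big((\xi\oplus g_m)\oplus\xi'\big)\in\Lambda$ because $g_m\in L$. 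This closes the induction and gives $\Xi\subseteq\xi\oplus\Lambda$; combining the inclusions proves the theorem, all invoked events holding with probability $1$ over the finitely many levels.

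I expect the main obstacle to be the bridge $\chi(\tilde R^v y)=\chi(y)\oplus g_i$: one must line up the hypotheses of Lemma~\ref{fact4} (identifying the correct sibling branch at level $i$ and checking that the normals are oriented so that the reflection genuinely produces $1-\lambda$ rather than $\lambda$), and simultaneously secure, via Prop~\ref{uniformlevel} and the definition of $I$, that $\Upsilon(y,v)$ holds for \emph{every} $y\in X$ and not merely for one distinguished embedding, so that $\Lambda$ really acts on all of $\Xi$. Once the bridge is in place, the inductive coset argument is routine.
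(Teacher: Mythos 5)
Your forward inclusion ($\xi\oplus\Lambda\subseteq\Xi$) is essentially the paper's own argument: reflect by $\tilde{R}^v$, get $\tilde{R}^vy\in X$ from Thm.~\ref{fact5}, and convert the geometric reflection into the tail-flip $\chi(\tilde{R}^vy)=\chi(y)\oplus g_i$ via Lemma~\ref{fact4}; you are in fact more explicit than the paper about why $\Upsilon(y,v)$ holds for the particular $y$ at hand. The genuine gap is in your reverse inclusion, at the words ``so $\Upsilon(y,\rho^{-1}(m))$ holds and $m\in I$''. The first claim does hold, but $m\in I$ does not follow from it: membership in $I$ is \emph{universally} quantified over the level --- every node of $\mathcal{V}_{m-1}$ with two valid children must reach feasible leaves through both children --- whereas you have produced exactly one such node, namely the common ancestor of the leaves of $y$ and $y'$. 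Nothing you invoke rules out, with probability $1$, another node at level $m-1$ with two feasible children one or both of whose subtrees dies before level $n$; any such node forces $m\notin I$, hence $g_m\notin L$, and your concluding step ``$\xi\oplus\xi'=g_m\oplus\bigl((\xi\oplus g_m)\oplus\xi'\bigr)\in\Lambda$ because $g_m\in L$'' collapses. What your induction actually delivers is that $\xi\oplus\xi'$ lies in the group generated by the $g_m$ over all \emph{split levels} $m$ of pairs of solutions; the theorem needs those split levels to lie in $I$, and that identification is exactly what is missing.

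This is not a pedantic quantifier issue; it is the probabilistic heart of this direction of the theorem. Prop.~\ref{uniformlevel}, on which you lean, only equalizes the \emph{number} of feasible children across a level; it says nothing about whether those children's subtrees survive to depth $n$, and ``feasible child'' coincides with ``child through which a feasible leaf is reachable'' only at the last level. That is precisely why the paper proves $\Xi\subseteq\xi\oplus\Lambda$ by induction on $n$, truncating $\Xi$ and $L$: every level is eventually treated as the last level of a truncated tree, where Prop.~\ref{uniformlevel} does yield the required dichotomy (if $n\in I$, level-$(n-1)$ nodes have zero or two feasible children; if $n\notin I$, zero or one), and the new generator $g_n=(0,\ldots,0,1)$ is absorbed by idempotency. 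Your induction keeps $n$ fixed and walks down disagreement levels, so at every step it needs survival-uniformity at an \emph{intermediate} level; Thm.~\ref{fact5} cannot supply it for the other nodes of that level, because applying it there presupposes $\Upsilon$ --- i.e.\ the very two-sided survival you are trying to establish --- for a solution through those nodes. To save your (otherwise cleaner) route you would need an additional lemma of the form ``with probability $1$, if one node of $\mathcal{V}_{m-1}$ reaches feasible leaves through both children, then every node of $\mathcal{V}_{m-1}$ with two feasible children does so'', which is not among the paper's results; otherwise you must restructure the reverse inclusion along the paper's truncation induction.
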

\begin{proof}
($\Rightarrow$) We show that $\xi\oplus\Lambda\subseteq \Xi$ with
  probability 1; because $\langle L\rangle=\Lambda$ it suffices to
  show that $\xi\oplus g_i\in \Xi$ for an arbitrary $g_i\in L$,
  i.e.~that there exists a valid embedding $w\in X$ such that
  $\chi(w)=\xi\oplus g_i$. Let $y\in\chi^{-1}(\xi)$ and
  $v=\rho^{-1}(i)$ such that $\Upsilon(y,v)$, and define
  $w=\tilde{R}^vy$ (where $\tilde{R}^v$ is defined in Thm.~\ref{fact5}
  above); by Thm.~\ref{fact5}, $w\in X$. Let $\alpha'$ be the leaf
  node of $\mathcal{T}$ such that $x(\alpha')=y$; by
  Lemma~\ref{fact1}, there is a leaf node $\beta'$ such that
  $x(\beta')=w$. We have to show that for all $\ell\ge i$ the node
  $\beta\in\mbox{\sf p}(\beta')\cap\mathcal{V}_\ell$ is such that
  $\lambda(\beta)=1-\lambda(\alpha)$, where $\alpha$ is the node in
  $\mbox{\sf p}(\alpha')\cap\mathcal{V}_\ell$. We proceed by induction
  on $\ell$. For $\ell=i$ this holds by Lemma \ref{reflection}. For
  $\ell>i$, the induction hypothesis allows us to apply Lemma
  \ref{fact4} and conclude that the event
  $\lambda(\alpha)=1-\lambda(\beta)$ occurs with probability 1.

($\Leftarrow$) Now we show that $\Xi\subseteq\xi\oplus\Lambda$ with
  probability 1, i.e.~for any $\eta\in \Xi$ there is $g\in\Lambda$ with
  $\xi\oplus g=\eta$. We proceed by induction on $n$, which starts
  when $n=K+1$: if $K+1\not\in I$ then $|\Xi|=1$, $L=\emptyset$ and the
  theorem holds; if $K+1\in I$ then $|\Xi|=2$, $L=\{g_{K+1}\}$ and
  the theorem holds. Now let $n>K+1$; for all $j\in\{K+1,\ldots,n-1\}$
  define $\Xi^j=\{\xi^j\;|\;\xi\in \Xi\}$ and
  $L^j=\{g_\ell\in\Gamma\;|\;\ell\in I\land \ell\le j\}$. By the
  induction hypothesis, for all $\xi'\in \Xi^j\;(\xi'\oplus\langle
  L^j\rangle=\Xi^j)$. Now, either $n\not\in I$ or $n\in I$; by
  Prop.~\ref{uniformlevel}, with probability 1 if $n\not\in I$ then nodes
  in $\mathcal{V}_{n-1}$ can only have zero or one feasible subnode
  (let $B_1^n$ be the set of all such feasible subnodes), and if
  $n\in I$ then nodes in $\mathcal{V}_{n-1}$ can only have zero or
  two feasible subnodes $\beta$ (let $B_2^n$ be the set of all such
  feasible subnodes). In the former case we let
  $\Xi^n=\{\xi(x(\beta))\;|\;\beta\in B_1^n\}$ and $L^n=L^{n-1}$; in the
  latter we let $\Xi^n=\{\xi(x(\beta))\;|\;\beta\in B_2^n\}$ and
  $L^n=L^{n-1}\cup\{g_n\}$. In both cases it is easy to verify that
  the theorem holds for $\Xi^n,L^n$: in the former case it follows by
  the induction hypothesis, and in the latter case it follows because
  $g_n=(0,\ldots,0,1)$, namely, if $\eta\in \Xi$ and $n\in I$ then
  take $\xi=\eta\oplus g_n$ (the result follows by idempotency of
  $g_n$). \qed
\end{proof}

\begin{corollary}
If a GDMDGP instance is feasible, $|X|$ is a power of two with
probability 1.
\label{poweroftwo}
\end{corollary}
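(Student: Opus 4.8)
The plan is to derive Corollary~\ref{poweroftwo} as a direct consequence of Theorem~\ref{mainthm}, which already does all the heavy lifting. The corollary asserts that whenever a GDMDGP instance is feasible, the number $|X|$ of valid embeddings is a power of two with probability~$1$. Since a feasible instance means $X\neq\emptyset$, and $\chi$ is a well-defined function with probability~$1$ by Lemma~\ref{chiwd}, the set $\Xi=\{\chi(y)\;|\;y\in X\}$ is nonempty and maps $X$ onto $\Xi$.

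First I would reduce counting $|X|$ to counting $|\Xi|$. By Lemma~\ref{chiwd}, $\chi$ is a function from $X$ to $\{0,1\}^n$ with probability~$1$. The map $\chi$ is in fact injective on $X$, since distinct valid embeddings correspond to distinct leaf $\boxplus$-nodes of $\mathcal{T}$ (by Alg.~\ref{alg1}, Step~\ref{valid1}), and the branching-path sequence $\chi(\alpha)=(\lambda(\beta)\;|\;\beta\in\mbox{\sf p}(\alpha))$ uniquely identifies the root-to-leaf path, hence the leaf. Therefore $|X|=|\Xi|$ with probability~$1$, and it suffices to show $|\Xi|$ is a power of two.

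Next I would invoke Theorem~\ref{mainthm}. Fixing any $\xi\in\Xi$ (which exists since $\Xi\neq\emptyset$), the theorem gives $\xi\oplus\Lambda=\Xi$ with probability~$1$, where $\Lambda=\langle L\rangle$ is the subgroup of $\mathcal{G}$ generated by $L=\{g_i\in\Gamma\;|\;i\in I\}$. Since $\oplus$ by the fixed element $\xi$ is a bijection of $\mathcal{G}$ onto itself, the translated coset $\xi\oplus\Lambda$ has the same cardinality as $\Lambda$, so $|\Xi|=|\Lambda|$ with probability~$1$. It then remains only to count $|\Lambda|$.

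Finally I would apply Lemma~\ref{cardcor}. Because $L\subseteq\Gamma$ and $\Gamma$ is a linearly independent set in the $\mathbb{F}_2$-vector space $\mathcal{V}$, the set $L$ is itself linearly independent, so Lemma~\ref{cardcor} with $H=L$ yields $|\Lambda|=|\langle L\rangle|=2^{|L|}=2^{|I|}$. Chaining the three equalities, $|X|=|\Xi|=|\Lambda|=2^{|I|}$ with probability~$1$, which is a power of two and completes the argument. I do not anticipate a genuine obstacle here: the corollary is a bookkeeping consequence of the established machinery. The only point requiring a little care is the injectivity of $\chi$ on $X$ (as opposed to mere well-definedness supplied by Lemma~\ref{chiwd}), but this follows immediately from the one-to-one correspondence between valid embeddings and feasible leaf nodes together with the fact that the sequence of $\lambda$-labels along a path determines that path.
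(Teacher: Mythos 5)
Your proof is correct and follows essentially the same route as the paper's: well-definedness of $\chi$ via Lemma~\ref{chiwd}, injectivity of $\chi$ from the branching structure of Alg.~\ref{alg1} (the paper cites Steps~\ref{chiinj1a}--\ref{chiinj2b} explicitly, which is exactly the fact that siblings receive distinct $\lambda$-labels, underpinning your path-identification argument), surjectivity onto $\Xi$ by definition, then Thm.~\ref{mainthm} to get $\Xi=\xi\oplus\Lambda$ and Lemma~\ref{cardcor} to conclude $|X|=|\Lambda|=2^{|I|}$. The only cosmetic difference is that you spell out the coset-translation bijection and the exponent $|I|$, which the paper leaves as ``easy to show.''
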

\begin{proof}
By Lemma \ref{chiwd} $\chi$ is a function with probability 1.  Let
$x,x'\in X$ be distinct; then by Alg.~\ref{alg1}, Steps~\ref{chiinj1a},~\ref{chiinj1b},~\ref{chiinj2a}, and~\ref{chiinj2b}, the map $\chi:X\to \Xi$ is injective. By definition
of $\Xi$ it is also surjective, hence $|X|=|\Xi|$. By
Thm.~\ref{mainthm} $|\Xi|=|\chi\oplus\Lambda|$ for all $\chi\in \Xi$
with probability 1.  It is easy to show that
$|\chi\oplus\Lambda|=|\Lambda|$, so by Lemma \ref{cardcor} $|X|$ is a
power of two with probability 1. \qed
\end{proof}

\section{Counterexamples}
\label{s:counterex}
We discuss a class of counterexamples to the conjecture that {\it all}
GDMDGP instances have a number of solutions which is a power of two
(also see Lemma 5.1 in \cite{lln2}). All these counterexamples are
hand-crafted and have the property that two distinct embeddings $x,x'$
have at least a level $i$ where $x_i=x'_i$, which is an event which
happens with probability 0. For any $K\ge 1$, let $n=K+3$,
$V=\{1,\ldots,n\}$, $E=\{\{i,j\}\;|\;0<i-j\le K\}\cup\{\{1,n\}\}$ and
$d_{ij}=1$ for all $\{i,j\}\in E$. The first $n-2=K+1$ points can be
embedded in the vertices of a regular simplex in dimension $K$; then
either $x_{n-1}=x_1$ or $x_{n-1}$ is the symmetric position from $x_1$
with respect to the hyperplane through $\{x_2,\ldots,x_{n-2}\}$. In
the first case, the two positions for $x_n$ are valid, in the second
only $x_n=x_2$ is possible (see Fig.~\ref{fig:ctr-ex-h2} for the
2-dimensional case), yielding a YES instance where $|X|=6$.
\begin{figure}[!ht]
\begin{center}
\subfloat[Positions of the points on the plane.]{\label{fig:ctr-ex-h21}
  \begin{tikzpicture}[>=stealth,scale=0.25]
    \draw (0,0) -- (8,0) -- (4,7) --cycle;
    \draw[dashed] (8,0) -- (12,7) -- (4,7);
    \draw[dotted] (0,0) -- (-4,7) -- (4,7);
    \draw[dotted] (4,7) -- (8,14) -- (12,7);
    \node[anchor=base] at (-1,-2) {$x_1 = x^{(0)}_4$};
    \node[anchor=base] at (11,-2) {$x_2 = \textcolor{darkgreen}{x^{(01)}_5} = \textcolor{darkgreen}{x^{(11)}_5}$};
    \node[anchor=base] at (3.5,8) {$x_3$};
    \node[anchor=base] at (13,8) {$x^{(1)}_4$};
    \node[anchor=base] at (-4,8) {$\textcolor{darkgreen}{x^{(00)}_5}$};
    \node[anchor=base] at (8,15) {$\textcolor{darkred}{x^{(10)}_5}$};
    \path (0,-3.2);
  \end{tikzpicture}
}\hfill\subfloat[BP tree.]{\label{fig:ctr-ex-h22}
  \begin{tikzpicture}[>=stealth,scale=0.9]
    \node[anchor=base] (z) at (0,0) {$x_1$};
    \node[anchor=base] (u) at (0,-1) {$x_2$};
    \node[anchor=base] (d) at (-1,-2) {$x_3$};
    \node[anchor=base] (t0) at (-3,-3) {$x^{(0)}_4$};
    \node[anchor=base] (t1) at (-1,-3) {$x^{(1)}_4$};
    \node[anchor=base] (q00) at (-4,-4) {\color{darkgreen} ${x^{(00)}_5}$};
    \node[anchor=base] (q01) at (-3,-4) {\color{darkgreen} ${x^{(01)}_5}$};
    \node[anchor=base] (q10) at (-2,-4) {\color{darkred} ${x^{(10)}_5}$};
    \node[anchor=base] (q11) at (-1,-4) {\color{darkgreen} ${x^{(11)}_5}$};
    \draw (z) -- (u) -- (d) -- (t0) -- (q00)
                               (t0) -- (q01)
                        (d) -- (t1) -- (q10)
                               (t1) -- (q11);
    \draw[dashed] (u) -- (0,-4) -- (2,-4) --cycle;
    \node at (0.8,-3.5) {symmetric};
    \path (0,-4.3);
  \end{tikzpicture}
}
\end{center}
\caption{The counterexample in the case $K=2$. Embeddings
  $x^{(00)}_5$, $x^{(01)}_5$, and $x^{(11)}_5$ are valid, while
  $x^{(10)}_5$ is not.}
\label{fig:ctr-ex-h2}  
\end{figure}
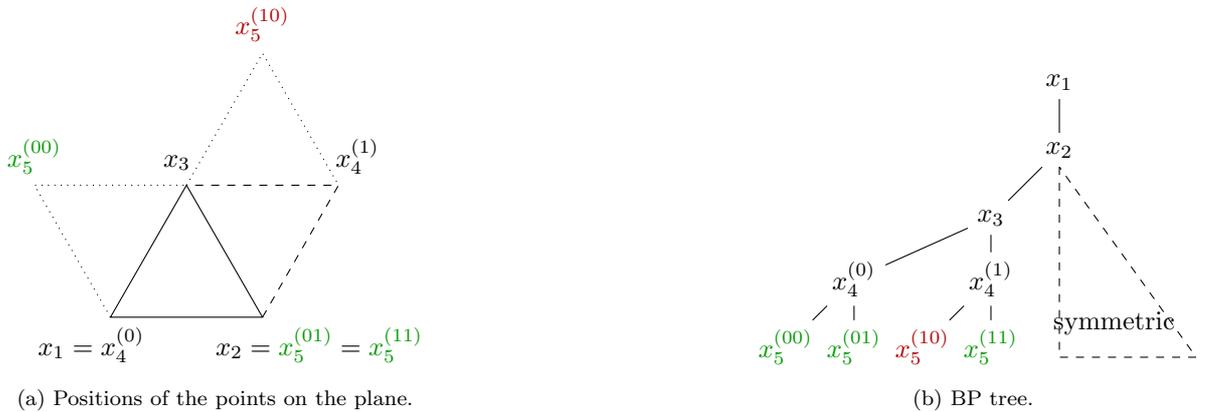

Lastly, Fig.~\ref{fig:ctr52} shows an example where the~\ref{it:fact4-2}~$\Rightarrow$~\ref{it:fact4-1} implication of Lemma \ref{fact4} fails for instances in
$\mbox{DDGP}\smallsetminus\mbox{GDMDGP}$. This shows that any
generalization of our result to the DDGP is not trivial. Let
$V=\{1,\ldots,6\}$ (the graph drawing is the same as the embedding in
$\mathbb{R}^2$). The nodes $5',6'$ linked with dashed lines show
alternative node placements. Let $U_5=\{3,4\}$ and $U_6=\{1,2\}$. The
line through the points $3,4$ does not provide a valid reflection
mapping $6$ to $6'$. This happens because $U_6$ does not consist of
the two {\it immediate} predecessors of $6$.
\begin{figure}[!ht]
\psfrag{1}{$1$}
\psfrag{2}{$2$}
\psfrag{3}{$3$}
\psfrag{4}{$4$}
\psfrag{5}{$5$}
\psfrag{5'}{$5'$}
\psfrag{6}{$6$}
\psfrag{6'}{$6'$}
\psfrag{Rv}{$R^v$}
\psfrag{U5}{$U_5$}
\psfrag{U6}{$U_6$}
\begin{center}
\includegraphics[width=10cm]{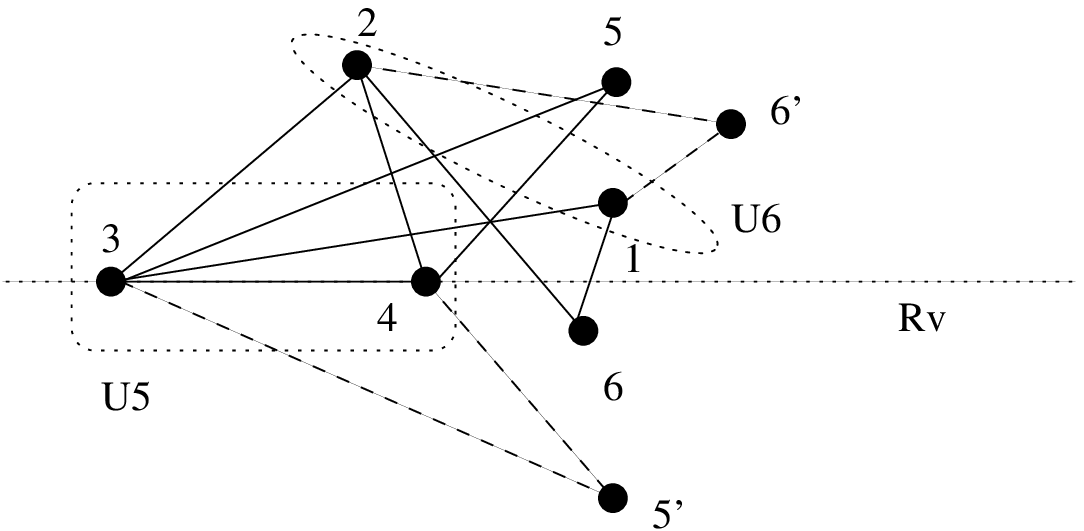}
\end{center}
\caption{A counterexample to Lemma \ref{fact4} applied to
  $\mbox{DDGP}\smallsetminus\mbox{GDMDGP}$.}
\label{fig:ctr52}
\end{figure}



\ifpaper
\bibliographystyle{splncs}
\else
\bibliographystyle{plain}
\fi
\bibliography{mdc}

\end{document}


We slightly modify the BP algorithm so that no feasible leaf node
occurs at a level $<n$. It suffices to add an instruction just before
Step~\ref{noinfpairs} in Alg.~\ref{alg1}
\begin{quote}
  {\small 17a. Let $\alpha=(0,0,\boxminus)$, $\alpha'=(0,1,\boxminus)$}
\end{quote}
and a last instruction to Alg.~\ref{alg1}, calling the recursive
procedure in Alg.~\ref{alg3}:
\begin{quote}
  {\small 9: {\sc PropagateInfeasible}($K+1$, $(\bar{x}_K,0,\boxplus)$)}.
\end{quote}
\begin{algorithm}[!ht]
\begin{algorithmic}[1]
\REQUIRE A search tree node $\beta$ at level $i$
\ENSURE All nodes in the subtree of $\beta$ with two infeasible subnodes are marked infeasible
\STATE {\sc PropagateInfeasible($i$, $\beta$)}:
\IF{$i>n-1$}
  \RETURN
\ENDIF
\FOR{$\theta\in N^+(\beta)$ such that $\mu(\theta)=\boxplus$}
  \STATE {\sc PropagateInfeasible}($i+1$, $\theta$)
\ENDFOR
\IF{$\{\theta\in N^+(\beta)\;|\;\mu(\theta)=\boxminus\}=2$}
  \STATE Let $\mu(\beta)=\boxminus$
\ENDIF
\RETURN
\end{algorithmic}
\caption{The recursive {\sc PropagateInfeasible} function.}
\label{alg3}
\end{algorithm}

With this modified BP, we can prove the following.
\begin{lemma}
All feasible non-leaf nodes $\beta$ have at least one feasible
subnode.
\label{atleastonefeasible}
\end{lemma}
\begin{proof}
Alg.~\ref{alg3} ensures that if $\beta$ is a non-leaf node marked
$\boxplus$ and has two subnodes both marked $\boxminus$, then $\beta$
is also marked $\boxminus$. 
\end{proof}